\newtheorem{definition}{Definition}
\newtheorem{lemma}{Lemma}
\newtheorem{theorem}{Theorem}
\newtheorem{corollary}{Corollary}
\newtheorem{remark}{Remark}
\newcommand{\eref}[1]{(\ref{#1})}
\newcommand{\sref}[1]{Section~\ref{#1}}
\newcommand{\appref}[1]{Appendix~\ref{#1}}
\newcommand{\fref}[1]{Figure~\ref{#1}}
\newcommand{\cref}[1]{Constraint~\ref{#1}}
\newcommand{\thref}[1]{Theorem~\ref{#1}}
\newcommand{\corref}[1]{Corollary~\ref{#1}}
\newcommand{\lref}[1]{Lemma~\ref{#1}}
\newcommand{\tref}[1]{Table~\ref{#1}}
\newcommand{\algref}[1]{Algorithm~\ref{#1}}
\newcommand{\ignore}[1]{}
\begin{document}

\title{Rate Aware Instantly Decodable Network Codes}

\author{
   \authorblockN{Ahmed Douik, \textit{Student Member, IEEE}, Sameh Sorour, \textit{Member, IEEE},\\ Tareq Y. Al-Naffouri, \textit{Member, IEEE}, and Mohamed-Slim Alouini, \textit{Fellow, IEEE}}
    
    {\thanks {A part of this paper is submitted to IEEE Global Telecommunications Conference Workshop (GLOBECOM' 2015), San Diego, California, USA.
    
Ahmed Douik is with the Department of Electrical Engineering, California Institute of Technology, Pasadena, CA 91125 USA (e-mail: ahmed.douik@caltech.edu).

Sameh Sorour is with the Electrical Engineering Department, King Fahd University of Petroleum and Minerals, Dhahran 31261, Saudi Arabia (e-mail: samehsorour@kfupm.edu.sa).

T. Y. Al-Naffouri is with King Abdullah University of Science and Technology, Thuwal 23955-6900, Saudi Arabia, and also with King Fahd University of Petroleum and Minerals, Dhahran 31261, Saudi Arabia (e-mail: tareq.alnaffouri@kaust.edu.sa).

M.-S. Alouini is with the Division of Computer, Electrical and Mathematical Sciences and Engineering, King Abdullah University of Science and Technology, Thuwal 23955-6900, Saudi Arabia (e-mail: slim.alouini@kaust.edu.sa).
}}
    }

\maketitle

\IEEEoverridecommandlockouts

\begin{abstract}
This paper addresses the problem of reducing the delivery time of data messages to cellular users using instantly decodable network coding (IDNC) with physical-layer rate awareness. While most of the existing literature on IDNC does not consider any physical layer complications and abstract the model as equally slotted time for all users, this paper proposes a cross-layer scheme that incorporates the different channel rates of the various users in the decision process of both the transmitted message combinations and the rates with which they are transmitted. The consideration of asymmetric rates for receivers reflects more practical application scenarios and introduces a new trade-off between the choice of coding combinations for various receivers and the broadcasting rate for achieving shorter completion time. The completion time minimization problem in such scenario is first shown to be intractable. The problem is, thus, approximated by reducing, at each transmission, the increase of an anticipated version of the completion time. The paper solves the problem by formulating it as a maximum weight clique problem over a newly designed rate aware IDNC (RA-IDNC) graph. The highest weight clique in the created graph being potentially not unique, the paper further suggests a multi-layer version of the proposed solution to improve the obtained results from the employed completion time approximation. Simulation results indicate that the cross-layer design largely outperforms the uncoded transmissions strategies and the classical IDNC scheme.
\end{abstract}

\begin{keywords}
Instantly decodable network coding, rate adaptation, completion time reduction, graph theory, maximum weight clique.
\end{keywords}

\section{Introduction} \label{sec:int}

\lettrine[lines=2]{I}{ntroduced} by Ahlswede \textit{et al.} in their seminal paper \cite{850663}, network coding (NC) has become a propitious paradigm for implementation in future wireless networks, providing fast and reliable real-time communications over fading channels, e.g., cellular, WiFi, and WiMAX. Based on the simple idea that the source and/or the intermediate nodes in a network can generate coded combinations of data messages \cite{6512065}, NC is a promising technique to improve throughput and to reduce delay in wireless networks.

Two categories of network coding can be distinguished in the literature named the Random Network Coding \cite{1228459,4015713} (RNC) and the Opportunistic Network Coding \cite{4937121,6145512,5191394} (ONC). RNC mixes messages using random and independent coefficients resulting in an optimality in reducing the number of transmissions and an ability to recover even without feedback. However, it is not suitable for real-time applications of interest in this paper as decoding can be performed only after the reception of the whole frame. On the other hand, ONC exploits the diversity of received and lost messages to generate the mixes online resulting in a lower decoding delay.

Recently, Instantly Decodable Network Coding (IDNC), an ONC subclass, captivated a significant number of works \cite{6655395,6766433,6620795,20112430,7037599,6809217,6882208} thanks to its fast decoding potential, essential for real-time applications, such as multimedia streaming \cite{6725590,5753573}. In IDNC, the coded message combinations need to be decoded at their reception instant and cannot be stored for future decoding. In other words, users that cannot decode a received message combination must discard it. To guarantee fast decoding, coded combinations in IDNC are encoded using binary XOR and are designed to offer instantaneous, low complexity message decoding. Despite the aforementioned instant decodability restriction, IDNC is still favorable in many applications requiring progressive and quick decoding, e.g., IPTV. Furthermore, IDNC is well suited for order-insensitive application requiring messages regardless of reception order such as roadside safety messages.

The problem of reducing the completion time in IDNC-based networks is the one of finding this optimal schedule of coded message combination so as to minimize the whole number of transmissions. Such optimization is very difficult even for erasure-free \cite{6775017} and offline coding \cite{4895447,5205612} scenarios, i.e., perfect knowledge of the future channel realizations. The problem is even more complicated in online \cite{5191398,1928762,4512941} coding scenarios as it becomes anti-causal due to the dynamic nature of the channel realizations and the dependence of the optimal solution of their effect.

The authors in \cite{6882208} formulate the optimal solution to the completion time minimization problem in IDNC-based networks as a shortest stochastic path (SSP). Despite the intractability of solving the SSP, the formulation allows to draw the theoretical guidelines for the policies that can minimize the completion delay. In \cite{5963123,6809217}, the authors extend their formulation to the limited and intermittent feedback scenarios. Reference \cite{6381028} investigate the minimization of the completion time in a time division duplex (TDD) IDNC-enabled network. In \cite{7037599}, the authors propose a new approach to solving the completion time problem based on a decoding delay control. The key idea is to approximate the completion time using a decoding delay dependent expression that can be used to reduce efficiently the overall delivery time. The approach is generalized in \cite{7037034} to decentralized distributed system using game theory tools.

Most of the previous works, optimizing different parameters in IDNC-based systems, consider an upper-layer view of the network and abstract its physical-layer conditions, e.g., fading, shadowing, etc., into simple erasure channel models. Moreover, all coded combinations from the base-station (BS) are assumed to be transmitted with the same physical-layer rate, thus occupying fixed durations of time. Such assumptions definitely simplified the modeling of IDNC scenarios at the expense of the impracticality of the channel model. Indeed, it is well-known that different users in a cellular network undergo different channel conditions and thus shall be served at various rates by the BS in order to receive the transmitted messages \cite{6874566}. Such variance in possible service rates indeed affects the decisions on both the signals to be combined, in each transmission, and the rate with which they are sent \cite{6662474}. Such rate adaptation affects not only the capability of different users to successfully receive each transmission, but it also determines the time duration needed to deliver the message.

The target of this paper is to study the completion time of delivering a group of messages to the network's users using rate-aware IDNC (RA-IDNC). This paper extends the previous upper-layer studies on IDNC in \cite{7037599,6809217,6882208}, by investigating the joint optimization of message combinations and employed rates in each of the transmissions, so as to minimize the overall completion time. Recently, rate adaptation in other network coding contexts is considered, e.g., \cite{6874566,6662474,6692154,6578154,5730588}. Nonetheless, this work is the first to study IDNC with rate awareness.

The paper's main contributions are as follows. The rate-aware IDNC completion time problem is first formulated and shown to be intractable as its upper-layer counterpart \cite{7037599,6809217,6882208}. Given this intractability, the paper considers a more manageable and online approximation of the completion time problem, called herein the anticipated completion time reduction problem. The paper formulates the problem as a maximum weight clique problem over a newly designed RA-IDNC graph. This new graph model incorporates both the IDNC possible coded combinations and the different transmission rates so as the communication is useful to the targeted subset of users. Parsing the considered problem as a maximum weight clique problem in the constructed graph provides efficient solvers for it, e.g., \cite{13265492,6607889}. Since the maximum weight clique in the designed graph may not be unique, the paper further suggests a multi-layer version of the proposed solution to improve the obtained results from the employed completion time approximation. The proposed solution is then tested and compared to classical upper-layer IDNC and other uncoded strategies using extensive simulations.

The rest of the paper is organized as follows: \sref{sec:sys} presents the system model and relevant definitions. Section \sref{form} illustrates the intractability of the rate-aware completion time problem and \sref{sec:com} presents a more tractable approximation of this problem. The proposed cross-layer algorithm using the designed RA-IDNC graph is, then, developed in \sref{sec:pro}. Finally, \sref{sec:sim} presents the simulation results before concluding in \sref{sec:con}.

\section{System Model and Definitions} \label{sec:sys}

\subsection{Network and Physical Layer Model}

Consider the downlink of a wireless radio network with a single BS. The BS is required to deliver a set $\mathcal{F}$ of messages to a set $\mathcal{U}$ of users. The messages in $\mathcal{F}$ are all assumed to be of an equal size of $N$ bits and could represent files, frames from a video stream, etc. The paper assumes that the time is indexed, such that the time index $t \in \mathds{N}^+$ corresponds to the starting time of the $t$-th transmission of a message (or a coded message combination) from the BS. In the rest of the paper, the notation $|\mathcal{X}|$ denotes the cardinality of the set $\mathcal{X}$.

Let $h_u(t)$ be the complex channel gain from the BS to the $u$-th user at the $t$-th transmission. It is assumed that $h_u(t),\ \forall~u \in \mathcal{U}$ remains constant during the $t$-th transmission. Let $P$ be the transmit power of the BS assumed to be fixed. The achievable capacity of the $u$-th user in the network during the $t$-th transmission can be written as:
\begin{align}
R_u(t) = \log_2 \left( 1 + \cfrac{P |h_u(t)|^2}{\Gamma(\sigma^2)}\right),
\end{align}
where $\Gamma$ denotes the capacity gap, and $\sigma^2$ denotes the Gaussian noise variance. Let $\mathcal{R} (t)= \{R_1(t),\ \dots,\ R_{|\mathcal{U}|}(t)\}$ be the set of the achievable capacities of all users during the $t$-th transmission.

The \emph{absolute time}\footnote{The term \emph{absolute time} is used to both designate the physical time and differentiate it from the indexed time denoting the beginning of the transmissions.} required to transmit a message of size $N$ in the $t$-th transfer using any selected rate $R(t)$ is $N/R(t)$. This paper assumes that the BS can correctly adjust its modulation scheme to target any value of the rate $R(t)$. Further, without loss of generality, it is assumed that the value of the capacity gap $\Gamma$ is chosen appropriately such that it enables perfect modulation. This means that the $u$-th user will not be able to receive the $t$-th transmission if $R(t) > R_u(t)$.

Ideally, the $t$-th transmission results in a successful reception at the $u$-th user if and only if the rate of the transmission is lower than the capacity, i.e., $R(t) \leq R_u(t)$. However, due to channel estimation errors, the user's capacity may be misestimated. To mitigate the effect of the estimation errors, let $\epsilon_u(R,R_u)$ be the message erasure probability at the $u$-th user with a capacity $R_u$ for a transmission at a rate $R$. In other words, the transmission at a rate $R$ for the user $u$-th user with capacity $R_u$ is erased with a probability of $\epsilon_u(R,R_u)$. Note that the particular case
\begin{align}
\epsilon_u(R,R_u)=
\begin{cases}
0 \hspace{0.5cm} &\text{if } R \leq R_u \\
1 \hspace{0.5cm} &\text{otherswise.}
\end{cases}
\label{eqepsi}
\end{align}
translates the perfect channel estimation scenario.

\subsection{Data Model}

The paper assumes that each user is interested in receiving all the messages in $\mathcal{F}$. At any given instant of time, these messages can be decomposed into the two sets from the $u$-th user perspective:
\begin{itemize}
\item The \emph{Has} set $\mathcal{H}_u$ containing the messages successfully received by the $u$-th user.
\item The \emph{Wants} set $\mathcal{W}_u = \mathcal{F} \setminus \mathcal{H}_u$ containing the messages missing at the $u$-th.
\end{itemize}

Initially, no user has any of these messages, i.e., $\mathcal{H}_u = \varnothing$ and $\mathcal{W}_u = \mathcal{F},\ \forall~u\in\mathcal{U}$. Therefore, the BS starts by sending each of these messages uncoded, with rates chosen according to the employed scheme\footnote{The rest of the paper explains only the considered RA-IDNC scheme. The rate selection of the other schemes can be found in the simulation results section.}. As explained above, the $t$-th message transmission is successfully received by the $u$-th user with a probability $1-\epsilon_u(R,R_u)$.

After one full round of transmitting messages uncoded, the BS exploits the diversity in the Has and Wants sets of the different users to send XOR-coded combinations of their missing messages. A user can decode a message $f$ from a coded message combination only if all the other messages in the combination are in its Has set. Indeed, such user can XOR these messages with the received coded message combination to extract $f$. From the instant decodability restriction of IDNC, a message combination with more than one unknown message for a given user is useless for that user, and thus it is discarded.

\begin{remark}
In the rest of the paper, the term ``transmission'' refers to both the process and the whole duration of sending any message by the BS.
\end{remark}

\subsection{Important Definitions}

This subsection gathers several definitions of the terms that are used throughout the paper.

\begin{definition}[Instantly Decodable Transmission]
A transmission $(\kappa(t),R(t))$ is instantly decodable for the $u$-th user if it is both communicated at a rate $R(t) \leq R_u(t)$ and the message combination $\kappa(t)$ contains only one message from $\mathcal{W}_u$.
\end{definition}

\fref{fig:SFM} illustrates an example of an IDNC transmission in a network composed of $3$ users and $3$ messages assuming the transmission rate is less than the capacities of all users. The packet combination $2 \oplus 3$ is:
\begin{itemize}
\item Non-instantly decodable for user $1$ as all the messages are in his Has set. The message does not bring new information.
\item Non-instantly decodable for user $2$ as it contains $2$ messages from his Wants set. User $2$ discards the message upon successful reception.
\item Instantly decodable for user $3$ as it contains only one message from $\mathcal{W}_3$. Indeed, users $3$ can XOR the combination $2 \oplus 3$ with message $3$ to retrieve message $2$.
\end{itemize}

\begin{figure}[t]
\centering
\includegraphics[width=1\linewidth]{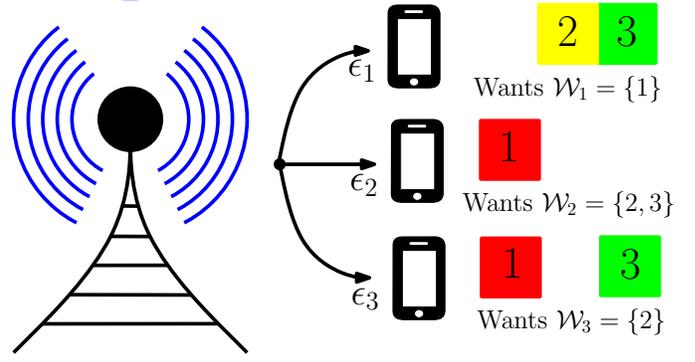}\\
\caption{A network composed of $3$ users and $3$ messages. It can be easily seen that the message combination $1 \oplus 2$ provides a new packet for all users. However, due to the rate asymmetry between users, it may not be optimal as shown in \fref{fig:graph}}\label{fig:SFM}
\end{figure}

\begin{definition}[Individual Completion Time]
The individual completion time $\mathcal{C}_u$ of the $u$-th user is the absolute time required until the user obtains all its the packets in $\mathcal{F}$.
\end{definition}

\begin{definition}[Overall Completion Time]
The overall completion time $\mathcal{C}$ is the absolute time required until all users obtain all the messages in $\mathcal{F}$, i.e., $\mathcal{C} = \max_{u\in\mathcal{U}} \{\mathcal{C}_u\}$.
\end{definition}

\begin{definition}[Transmission Schedule]
A transmission schedule $\mathcal{S} = \{(\kappa(t),R(t))\}$, $\forall~t\in\{1,\ \dots,\ |\mathcal{S}|\}$ as the set of chosen message combinations and rates at every time index $t$ until the overall completion time is reached. Further, define $\mathbb{S}$ as the set of all possible transmission schedules.
\end{definition}

\begin{definition}[Time Delay]
In the $t$-th transmission, the $u$-th user, with non-empty Wants set, experiences $N/R(t)$ seconds of time delay increase if he successfully receives a transmission that is non-instantly decodable for that user. Further, define the accumulated time delay $\mathcal{T}_u(\mathcal{S})$ experienced by the $u$-th user as the sum of all such time delay increases experienced by the $u$-th user throughout the schedule $\mathcal{S}$ until its individual completion time.
\end{definition}

Assuming error-free transmissions at a rate of $1$ bit/sec that is less than the capacities of all users in the example illustrated in \fref{fig:SFM}, the schedule of messages $\mathcal{S} =\{2 \oplus 3, 1, 1 \oplus 3,2\}$ yields the following:
\begin{itemize}
\item Individual completion time: $\mathcal{C}_1=N$, $\mathcal{C}_2=4N$, and $\mathcal{C}_3=2N$ seconds.
\item Overall completion time: $\mathcal{C}=4N$ seconds.
\item Time delay: $\mathcal{T}_1=N$, $\mathcal{T}_2=2N$, and $\mathcal{T}_3=0$ seconds.
\end{itemize}

On the other hand, for a fixed rate $R=1$, the optimal schedule $\mathcal{S}^* =\{1 \oplus 2,3\}$ provides the least individual and overall completion time of $\mathcal{C}=2N$ and zeros delay for all users.

\section{Problem Formulation} \label{form}

The aim of this paper is to investigate the problem of finding the optimal schedule $\mathcal{S}^*$, so as to minimize the overall completion time $\mathcal{C}(\mathcal{S}^*)$. From the above definitions, the problem can be expressed as shown in the following theorem.
\begin{theorem}
The minimum overall completion time problem in rate aware IDNC reduces to finding the optimal schedule $\mathcal{S}^*$, such that:
\begin{align}
\mathcal{S}^* \approx \arg\min_{\mathcal{S} \in \mathbb{S}}\left\{ \max_{u\in\mathcal{U}}\left\{ \left( \cfrac{N|\mathcal{F}|}{\tilde{R}_u(\mathcal{S})} + \mathcal{T}_u(S) \right) \left( 1 - \overline{\epsilon}_u \right)^{-1}\right\}\right\}\;,
\label{eq25}
\end{align}
where $\tilde{R}_u(\mathcal{S})$ is the harmonic mean of the rates of the successfully received transmissions that are instantly decodable for the $u$-th user in schedule $\mathcal{S}$ and where $\overline{\epsilon}_u = \mathds{E}_{R(t)\leq R_u(t)}[\epsilon_u(R(t),R_u(t))]$ is the average erasure probability for the transmission with a rate $R(t)\leq R_u(t)$.
\label{th1}
\end{theorem}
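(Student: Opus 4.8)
The plan is to derive a closed-form expression for the individual completion time $\mathcal{C}_u(\mathcal{S})$ of each user, assemble the overall completion time as the maximum over users, and then identify where the approximation sign in \eref{eq25} originates.

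First I would decompose the absolute time that user $u$ accumulates until its individual completion into two disjoint contributions: the time spent on transmissions that are instantly decodable and successfully received by $u$ (which actually deliver new messages), and the time delay $\mathcal{T}_u(\mathcal{S})$ collected from non-instantly-decodable but successfully received transmissions, exactly as in the Time Delay definition. Since user $u$ initially wants all of $\mathcal{F}$ and each instantly decodable reception delivers precisely one new message, there are exactly $|\mathcal{F}|$ useful instantly decodable receptions for $u$; writing their rates as $R_1,\dots,R_{|\mathcal{F}|}$, each takes absolute time $N/R_i$. Invoking the definition of the harmonic mean $\tilde{R}_u(\mathcal{S}) = |\mathcal{F}|/\sum_{i=1}^{|\mathcal{F}|}(1/R_i)$ then collapses the sum of useful-transmission times into $\sum_{i=1}^{|\mathcal{F}|} N/R_i = N|\mathcal{F}|/\tilde{R}_u(\mathcal{S})$, so the error-free individual completion time is $N|\mathcal{F}|/\tilde{R}_u(\mathcal{S}) + \mathcal{T}_u(\mathcal{S})$.

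Next I would reintroduce erasures. Under the channel model, a transmission at rate $R \le R_u$ is lost with probability $\epsilon_u(R,R_u)$, so the number of attempts needed to land one successful reception is geometric with mean $(1-\epsilon_u)^{-1}$. Replacing the per-transmission erasure probabilities by their average $\overline{\epsilon}_u = \mathds{E}_{R(t)\le R_u(t)}[\epsilon_u(R(t),R_u(t))]$ inflates both time components by the common factor $(1-\overline{\epsilon}_u)^{-1}$, producing the bracketed quantity in \eref{eq25} as the expected individual completion time. Applying the Overall Completion Time definition $\mathcal{C}=\max_u \mathcal{C}_u$ finally yields the outer maximum.

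The main subtlety --- and the reason \eref{eq25} carries an approximation sign rather than an equality --- is the interchange of maximum and expectation: the true objective is $\mathds{E}[\max_u \mathcal{C}_u]$, whereas the stated expression is $\max_u \mathds{E}[\mathcal{C}_u]$, and these agree only in degenerate regimes. A secondary approximation enters in the retransmission step, where collapsing the realization-dependent geometric counts over heterogeneous per-transmission erasure probabilities into a single averaged factor $(1-\overline{\epsilon}_u)^{-1}$ linearizes the expected retransmission overhead. I expect the careful justification of these two averaging steps to be the crux, while the harmonic-mean collapse and the maximum assembly are routine.
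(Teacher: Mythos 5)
Your first stage matches the paper's argument: the paper likewise splits the transmissions before user $u$'s completion into the successfully received instantly decodable ones (shown to number exactly $|\mathcal{F}|$, then collapsed via the harmonic mean into $N|\mathcal{F}|/\tilde{R}_u(\mathcal{S})$) and the delay-causing ones summing to $\mathcal{T}_u(\mathcal{S})$. Where you diverge --- and where there is a genuine gap --- is the erasure factor. The paper uses a \emph{three}-way decomposition, keeping a third term $\mathcal{E}_u(\mathcal{S})$ for the cumulative time of transmissions erased at user $u$ with $R(t)\leq R_u(t)$, so that $\mathcal{C}_u(\mathcal{S}) = N|\mathcal{F}|/\tilde{R}_u(\mathcal{S}) + \mathcal{T}_u(\mathcal{S}) + \mathcal{E}_u(\mathcal{S})$ exactly. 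It then approximates $\mathcal{E}_u(\mathcal{S})$ by its mean $\overline{\epsilon}_u\,\mathcal{C}_u(\mathcal{S})$ (Bernoulli erasure indicators with averaged parameter, plus the weak law of large numbers) and \emph{solves the resulting fixed-point equation} for $\mathcal{C}_u(\mathcal{S})$, which is what makes the factor $(1-\overline{\epsilon}_u)^{-1}$ multiply the delay term as well. Your geometric-retransmission argument plausibly inflates the useful-reception time, but your claim that it ``inflates both time components by the common factor'' is asserted, not derived, and it is semantically inconsistent with the theorem's own definition: $\mathcal{T}_u(\mathcal{S})$ is already the \emph{realized} accumulated delay from successfully received non-decodable transmissions, so there is no retransmission count attached to it that could be inflated. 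Without the third term and the self-referential step, you cannot legitimately place $\mathcal{T}_u(\mathcal{S})$ inside the $(1-\overline{\epsilon}_u)^{-1}$ bracket.

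A secondary point: you locate the approximation sign in an interchange of $\max$ and expectation, but the paper's objective is never an expectation over channel realizations --- the $\approx$ in \eref{eq25} comes from (i) replacing the heterogeneous per-transmission erasure probabilities by the single average $\overline{\epsilon}_u$, (ii) replacing the realized $\mathcal{E}_u(\mathcal{S})$ by its mean via the weak law of large numbers, and (iii) treating $\tilde{R}_u$ as schedule-wide rather than time-indexed. Your ``secondary approximation'' correctly flags (i), but the max-versus-expectation issue you call the crux is not where the paper's approximation lives.
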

\begin{proof}
To show this theorem, the individual completion time is first expressed as a sum of the instantly decodable transmissions time, the time delay, and the erased transmissions. Afterward, the time of immediately decodable transmissions is approximated using the mean transmission time that can be written as a function of the harmonic mean of the rates of the transmissions that are instantly decodable for the user. Finally, to conclude the proof, the time of the erased transmission is approximated using the individual completion time of the user. The complete proof can be found in \appref{app1}.
\end{proof}

\begin{remark}
The rest of the paper uses the approximation proposed in \thref{th1} as equality for it holds for a large number of transmissions and/or a nearly constant erasure probability.
\end{remark}

The following corollary provides the expression of the completion time for a perfect channel estimation.
\begin{corollary}
The optimal schedule $\mathcal{S}^*$ that minimizes the overall completion in rate aware IDNC with perfect channel estimation can be, precisely, written as follows:
\begin{align}
\mathcal{S}^* = \arg\min_{\mathcal{S} \in \mathbb{S}}\left\{ \max_{u\in\mathcal{U}} \left( \cfrac{N|\mathcal{F}|}{\tilde{R}_u(\mathcal{S})} + \mathcal{T}_u(S) \right) \right\}.
\label{eqcor2}
\end{align}
\label{cor2}
\end{corollary}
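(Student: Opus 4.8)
The plan is to obtain the corollary as the specialization of \thref{th1} to the perfect channel estimation model of \eref{eqepsi}, and to argue that in this regime the approximation sign in \eref{eq25} can be promoted to an exact equality.

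First I would substitute the perfect-estimation erasure model into the definition of the average erasure probability $\overline{\epsilon}_u$. Since \eref{eqepsi} forces $\epsilon_u(R,R_u)=0$ whenever $R \leq R_u$, and $\overline{\epsilon}_u$ is by definition the expectation of $\epsilon_u(R(t),R_u(t))$ conditioned on $R(t)\leq R_u(t)$, it follows immediately that $\overline{\epsilon}_u = 0$ for every user $u \in \mathcal{U}$. Consequently the scaling factor in \eref{eq25} collapses, $\left(1-\overline{\epsilon}_u\right)^{-1} = 1$, and the inner objective reduces to $N|\mathcal{F}|/\tilde{R}_u(\mathcal{S}) + \mathcal{T}_u(\mathcal{S})$.

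Next I would show that, in the absence of erasures, this expression is exactly the individual completion time $\mathcal{C}_u$, and not merely an approximation of it. Recall from the proof of \thref{th1} (in \appref{app1}) that $\mathcal{C}_u$ decomposes into three contributions: the total duration of the instantly decodable transmissions, the accumulated time delay $\mathcal{T}_u(\mathcal{S})$, and the time lost to erased transmissions. Under \eref{eqepsi} the third contribution vanishes identically, so the only approximation invoked in \thref{th1} disappears. For the first contribution, each user must receive exactly $|\mathcal{F}|$ successful instantly decodable transmissions, one per wanted message, and the total time they occupy is the sum of the per-transmission durations $N/R(t)$ over these transmissions; by the very definition of the harmonic mean $\tilde{R}_u(\mathcal{S})$ this sum equals $N|\mathcal{F}|/\tilde{R}_u(\mathcal{S})$ exactly. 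Hence $\mathcal{C}_u = N|\mathcal{F}|/\tilde{R}_u(\mathcal{S}) + \mathcal{T}_u(\mathcal{S})$ holds with equality.

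Finally I would combine these observations: taking $\mathcal{C}(\mathcal{S}) = \max_{u\in\mathcal{U}} \mathcal{C}_u$ by the definition of the overall completion time and minimizing over $\mathcal{S}\in\mathbb{S}$ recovers \eref{eqcor2} with an exact equality characterizing $\mathcal{S}^*$. The main point to be careful about is precisely this claim of exactness: I must confirm that the harmonic-mean step is a definitional identity rather than a mean-value approximation, and that the erasure term is the sole source of the $\approx$ in \thref{th1}. Once both facts are verified, no residual approximation survives and the corollary follows directly.
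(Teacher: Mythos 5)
Your proposal is correct and follows essentially the same route as the paper: specialize \thref{th1} to the erasure model \eref{eqepsi}, observe that $\overline{\epsilon}_u = 0$ so the factor $\left(1-\overline{\epsilon}_u\right)^{-1}$ disappears, and then argue that the sole source of the approximation in \eref{eq25} --- the weak-law-of-large-numbers replacement of the erased-transmission time $\mathcal{E}_u(\mathcal{S})$ by its mean --- becomes exact because that term vanishes identically. Your explicit check that the harmonic-mean step is a definitional identity rather than an approximation is a slightly more careful accounting than the paper's one-line ``WLLN applied to a constant'' remark, but it is the same argument.
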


\begin{proof}
As shown in the previous section, the perfect channel estimation scenario can be obtained by considering the erasure probability of the $u$-th user as follows:
\begin{align}
\epsilon_u(R,R_u)=
\begin{cases}
0 \hspace{0.5cm} &\text{if } R \leq R_u \\
1 \hspace{0.5cm} &\text{otherswise.}
\end{cases}
\end{align}
It can easily be seen that the average erasure likelihood of the $u$-th user for transmissions verifying $R(t)\leq R_u(t)$ is zero, i.e., :
\begin{align}
{\epsilon}_u = \mathds{E}_{R(t)\leq R_u(t)}[\epsilon_u(R(t),R_u(t))] = 0
\end{align}
Therefore, the optimal schedule that reduces the completion time can be expressed as follows:
\begin{align}
\mathcal{S}^* \approx \arg\min_{\mathcal{S} \in \mathbb{S}}\left\{ \max_{u\in\mathcal{U}} \left( \cfrac{N|\mathcal{F}|}{\tilde{R}_u(\mathcal{S})} + \mathcal{T}_u(S) \right) \right\}.
\end{align}
Finally, applying the weak law of large number to a constant quantity yields an exact expression instead of an approximation that explain the equality in \eref{eqcor2}
\end{proof}

Finding the global optimal solution to the optimization problem \eref{eq25} is clearly intractable due to the dynamic nature of the channel realizations and the interdependence of optimal schedule on them. Such relationship makes the decision depends on future inputs, i.e., it makes the problem anti-causal. Further, the determination of the optimal schedule, prior to its start, is shown to be very difficult, even without rate adaptation \cite{7037599,6809217,6882208}. Given the complexity of finding the optimal schedule, this paper reformulates the problem by a more tractable approximation of the completion times called herein the \emph{anticipated completion times}.

\section{Anticipated Completion Time Formulation} \label{sec:com}

\subsection{Definition of Anticipated Completion Times}

Let the anticipated individual completion time (denoted by $\mathcal{C}_u(t)$) of the $u$-th user after the $t$-th transmission be its completion time if that user does not experience any additional increases in its time delay, i.e., it does not receive any other non-instantly decodable transmissions, starting from the $t$-th transmission. From the expression of $\mathcal{C}_u(\mathcal{S})$ in \eref{eq25}, it is easy to infer that $\mathcal{C}_u(t)$ can be approximated by:
\begin{align}
\mathcal{C}_u(t) = \left(\cfrac{N|\mathcal{F}|}{\tilde{R}_u(t)} + \mathcal{T}_u(t)\right) \left( 1 - \tilde{\epsilon}_u(t) \right)^{-1},
\label{ct}
\end{align}
where $\tilde{R}_u(t)$ is the harmonic mean of the rates of all transmissions $t^\prime \leq t$ that are successfully received and instantly-decodable for the $u$-th user, $\mathcal{T}_u(t)$ is the accumulated time delay experienced by the user until the $t$-th transmission, and $\tilde{\epsilon}_u$ is the average erasure probability for $R(k) \leq R_u(k), 1 \leq k \leq t$. In fact, the above expression of $\mathcal{C}_u(t)$ of the $u$-th user would have been exact if $\tilde{R}_u(t) = \tilde{R}_u(\mathcal{S})$ and $\tilde{\epsilon}_u(t) = \overline{\epsilon}_u$.

\subsection{Problem Approximation}

This subsection approximates the optimization problem \eref{eq25} by a more tractable version using the definition of anticipated completion times in \eref{ct}. Let $u^*$ be the user with the maximum anticipated completion time at transmission $t-1$, i.e., $u^* = \arg \max_{ u \in \mathcal{U}}\{\mathcal{C}_u(t-1)\}$. Let $\mathcal{K}_{R(t)}$ be the set of decisive users defined as follows:
\begin{align}
\mathcal{K}_{R(t)} = \left\{ u \in \mathcal{U} ~\middle|~ \ \mathcal{C}_u(t-1) + \frac{N}{R(t)} \geq \mathcal{C}_{u^*}(t-1) \right\}.
\label{kk}
\end{align}

Clearly, $\mathcal{K}_{R(t)}$ is the set of users that can increase the maximum anticipated overall completion time, from its maximum value after the $(t-1)$-th transmission, i.e., $\mathcal{C}_{u^*}(t-1)$. Indeed users in $\mathcal{K}_{R(t)}$ increase the maximum anticipated completion time if the $t$-th transmission using rate $R(t)$ is non-instantly decodable for them. The derivation of this set can be found in \appref{app2}.

Given the above definition, the completion time minimization problem \eref{eq25} is approximated by the online optimization problem of finding the message combination and the transfer rate, at each transmission $t$, so as to minimize the expected increase of the overall anticipated completion time from the $(t-1)$-th transmission to the $t$-th one as shown in the following lemma.
\begin{lemma}
The completion time reduction problem \eref{eq25} can be approximated, at each transmission $t$, by the following joint optimization problem over the message combination $\kappa(t)$ and the transfer rate $R(t)$:
\begin{align}
&(\kappa^*(t), R^*(t)) \label{opti} \\
& \qquad = \arg \max_{ \substack{\kappa(t) \in \mathcal{P}(\mathcal{F}) \\ R(t) \in \mathcal{R}(t)}} \ \sum_{u \in \mathcal{K}_{R(t)} \cap \tau_{\kappa(t)}} \log \left(\frac{R(t)}{\epsilon_u(R,R_u)N} \right), \nonumber
\end{align}
where the notation $\mathcal{P}(\mathcal{X})$ refers to the power set of the set $\mathcal{X}$ and $\tau_{\kappa(t)}$ denotes the set of users for those the transmission the message combination $\kappa(t)$ at a rate $R(t)$ is instantly decodable.
\label{l1}
\end{lemma}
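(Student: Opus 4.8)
The plan is to begin from the quantity the lemma minimizes, the expected increase of the overall anticipated completion time, $\mathds{E}[\max_{u\in\mathcal{U}}\mathcal{C}_u(t)]-\max_{u\in\mathcal{U}}\mathcal{C}_u(t-1)$, and to reduce it to a separable objective over only the users that the choice of $(\kappa(t),R(t))$ can affect. First I would invoke the decisive set $\mathcal{K}_{R(t)}$ of \eref{kk}: any $u\notin\mathcal{K}_{R(t)}$ obeys $\mathcal{C}_u(t-1)+N/R(t)<\mathcal{C}_{u^*}(t-1)$, so even a full time-delay increment $N/R(t)$ keeps its anticipated completion time below the running maximum $\mathcal{C}_{u^*}(t-1)$. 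Consequently the increase of $\max_{u}\mathcal{C}_u(\cdot)$ can be produced only by decisive users, and the optimization may restrict attention to $\mathcal{K}_{R(t)}$ (the formal derivation of this set being the content of \appref{app2}).

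Next I would carry out a per-user accounting of $\mathcal{C}_u(t)$ through \eref{ct}, splitting $\mathcal{K}_{R(t)}$ according to instant decodability. For a decisive user reached instantly-decodably, i.e. $u\in\mathcal{K}_{R(t)}\cap\tau_{\kappa(t)}$, the packet is received with probability $1-\epsilon_u(R,R_u)$; this adds no term to the time delay $\mathcal{T}_u(t)$ and refreshes the harmonic-mean rate $\tilde{R}_u(t)$ and the erasure factor $(1-\tilde{\epsilon}_u(t))^{-1}$ in \eref{ct}. For a decisive user that is not served, a successful but non-instantly-decodable reception instead inflates $\mathcal{T}_u(t)$ by $N/R(t)$ and drives $\mathcal{C}_u$ up. Hence serving a decisive user is always preferable to leaving it unserved, and minimizing the expected increase of the maximum becomes equivalent to maximizing the aggregate expected gain collected over the served decisive set $\mathcal{K}_{R(t)}\cap\tau_{\kappa(t)}$, to be optimized jointly over the combination $\kappa(t)$ (which fixes $\tau_{\kappa(t)}$ via the IDNC instant-decodability rule) and the rate $R(t)$ (which fixes both $\mathcal{K}_{R(t)}$ and the set of reachable users).

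The final and hardest step is to turn this expected-maximum reduction into the additive objective \eref{opti}. Here I would replace the intractable $\mathds{E}[\max_u\mathcal{C}_u(t)]$ by a separable surrogate, namely the product (geometric mean) of the per-user completion-time factors, which decouples the contribution of each served decisive user; this surrogate is legitimate in the large-number / nearly-constant-erasure regime already adopted as an equality in the remark following \thref{th1}. Substituting \eref{ct}, the gain produced by one successful instantly-decodable reception at rate $R(t)$ scales multiplicatively like $R(t)/(\epsilon_u(R,R_u)N)$, where $R(t)/N$ is the per-message throughput entering through the harmonic-mean update of $\tilde{R}_u$ and the weight $1/\epsilon_u(R,R_u)$ enters through the erasure factor $(1-\tilde{\epsilon}_u)^{-1}$ and up-weights the more reliable decisive users. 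Taking logarithms then converts this product into the sum $\sum_{u\in\mathcal{K}_{R(t)}\cap\tau_{\kappa(t)}}\log\!\big(R(t)/(\epsilon_u(R,R_u)N)\big)$ of \eref{opti}. I expect the obstacle to lie exactly in this reduction: both justifying the product/log surrogate for the min--max and pinning down the precise weight $R(t)/(\epsilon_u(R,R_u)N)$ rest on the stated approximation, while the combinatorial coupling — only IDNC-compatible user subsets may share a transmission, and raising $R(t)$ enlarges every logarithmic term but simultaneously shrinks $\mathcal{K}_{R(t)}$ and $\tau_{\kappa(t)}$ — is what obstructs a closed form and motivates the maximum-weight-clique reformulation of the next section.
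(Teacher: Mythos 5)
Your first two paragraphs track the paper's argument in \appref{app2} faithfully: the restriction to the decisive set $\mathcal{K}_{R(t)}$, and the per-user update rule obtained from \eref{ct} under the approximations $\tilde{R}_u(t)\approx\tilde{R}_u(t-1)$ and $\tilde{\epsilon}_u(t)\approx\tilde{\epsilon}_u(t-1)$ (served or erased gives $\mathcal{C}_u(t)=\mathcal{C}_u(t-1)$; successfully received but non-instantly-decodable gives $\mathcal{C}_u(t)=\mathcal{C}_u(t-1)+N/R(t)$). The gap is in your third step, where the weight $\log\left(R(t)/(\epsilon_u(R,R_u)N)\right)$ is supposed to emerge. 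You attribute a multiplicative per-user ``gain'' of $R(t)/(\epsilon_u N)$ to each \emph{served} decisive user, claiming it enters through the harmonic-mean refresh of $\tilde{R}_u$ and through the factor $(1-\tilde{\epsilon}_u)^{-1}$. That contradicts the accounting you just did: under the very approximations you invoked, a served user's anticipated completion time is \emph{unchanged}, so neither $\tilde{R}_u$ nor $\tilde{\epsilon}_u$ contributes any such factor, and no geometric-mean surrogate of $\max_u \mathcal{C}_u(t)$ will manufacture it.

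The paper's actual mechanism is different and is the idea you are missing. It minimizes the \emph{probability} $\mathds{P}(\mathbf{A})$ that the maximum anticipated completion time increases. Since the maximum increases if and only if at least one decisive, non-targeted user with a non-empty Wants set successfully receives the transmission, independence of erasures gives $\mathds{P}(\mathbf{A}) = 1 - \prod_{u\in(\mathcal{K}_{R(t)}\cap\mathcal{N})\setminus\tau_{\kappa(t)}} \epsilon_u(R(t),R_u(t))$, so the erasure probabilities enter through the users you do \emph{not} serve. Maximizing this product (each factor weighted by the increment size $N/R(t)$), taking logarithms, and then complementing the index set $(\mathcal{K}_{R(t)}\cap\mathcal{N})\setminus\tau_{\kappa(t)}$ against $\mathcal{K}_{R(t)}\cap\mathcal{N}$ --- which turns an $\arg\max$ over non-targeted users into an $\arg\min$ over targeted ones, hence an $\arg\max$ of the negated logarithms --- is what produces $\sum_{u\in\mathcal{K}_{R(t)}\cap\tau_{\kappa(t)}}\log\left(R(t)/(\epsilon_u(R,R_u)N)\right)$. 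Without this probability-of-increase computation and the set-complementation step, your derivation does not reach \eref{opti}; as written, the surrogate you propose is asserted rather than derived, exactly at the point you yourself flagged as the hard part.
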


\begin{proof}
To demonstrate the lemma, the set $\mathcal{K}_{R(t)}$ of users that can potentially increase the anticipated version of the completion time is first identified and its expression \eref{kk} established. The set of such users is used to approximate the problem as a problem of minimizing the increase in the anticipated completion time. Finally, the joint optimization problem over the message combination and the transfer rate is derived. The complete proof can be found in \appref{app2}.
\end{proof}

The following corollary introduces the joint optimization problem over the message combination and the transmission rate for a perfect channel estimation scenario.
\begin{corollary}
The completion time reduction problem with perfect channel estimation can be approximated by the following online optimization problem over the message combination $\kappa(t)$ and the transmission rate $R(t)$:
\begin{align}
&(\kappa^*(t), R^*(t)) = \arg \max_{ \substack{\kappa(t) \in \mathcal{P}(\mathcal{F}) \\ R(t) \in \mathcal{R}(t)}} \ \sum_{u \in \mathcal{K}_{R(t)} \cap \tau_{\kappa(t)}} \log \left(\frac{R(t)}{N} \right). \nonumber
\end{align}
\end{corollary}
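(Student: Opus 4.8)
The plan is to obtain this corollary as a direct specialization of \lref{l1} to the perfect channel estimation model, in the same spirit as the argument already used for \corref{cor2}. The starting object is the joint optimization \eref{opti}, whose per-user weight $\log(R(t)/(\epsilon_u(R,R_u)N))$ originates, through the derivation in \appref{app2}, from the erasure factor $(1-\tilde{\epsilon}_u(t))^{-1}$ appearing in the anticipated completion time \eref{ct}. First I would make explicit where this erasure probability enters that derivation, and then re-evaluate the relevant stage under the perfect-estimation erasure model \eref{eqepsi}.

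The main step is an observation about the summation domain. The sum in \eref{opti} ranges over $u \in \mathcal{K}_{R(t)} \cap \tau_{\kappa(t)} \subseteq \tau_{\kappa(t)}$, i.e.\ only over users for whom the transmission $(\kappa(t),R(t))$ is instantly decodable. By the definition of an instantly decodable transmission, membership $u\in\tau_{\kappa(t)}$ forces $R(t)\leq R_u(t)$. Under \eref{eqepsi} this immediately gives $\epsilon_u(R(t),R_u(t))=0$ for every such user, and hence $\tilde{\epsilon}_u(t)=0$ for every decisive, targeted user. Substituting $\tilde{\epsilon}_u(t)=0$ into \eref{ct} collapses the factor $(1-\tilde{\epsilon}_u(t))^{-1}$ to $1$, exactly as in the proof of \corref{cor2}; the anticipated completion time of each such user then reduces to $N|\mathcal{F}|/\tilde{R}_u(t)+\mathcal{T}_u(t)$, with no erasure contribution. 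Re-running the $\arg\max$ construction of \appref{app2} with this erasure-free completion time removes the erasure-dependent part of the per-user weight and leaves $\log(R(t)/N)$, which is precisely the claimed objective.

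The subtlety I expect to be the main obstacle is \emph{where} the vanishing of the erasure probability is applied. Inserting $\epsilon_u=0$ naively into the final weight $\log(R(t)/(\epsilon_u(R,R_u)N))$ is ill-defined, since it produces $\log(+\infty)$. The correct route is therefore to return to the stage of the \appref{app2} derivation \emph{before} the logarithmic objective is assembled, set $\tilde{\epsilon}_u(t)=0$ in \eref{ct} so that it is the multiplicative factor $(1-\tilde{\epsilon}_u(t))^{-1}$ — and not $\epsilon_u$ in isolation — that simplifies to unity, and only then form the objective. Once this ordering is respected the reduction is immediate; note that the remaining anticipated-completion-time approximation inherited from \eref{ct} still justifies the qualifier \emph{can be approximated} in the statement.
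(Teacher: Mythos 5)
Your central observation is the same one the paper uses: the summation in \eref{opti} runs only over $u \in \mathcal{K}_{R(t)} \cap \tau_{\kappa(t)}$, instant decodability forces $R(t) \leq R_u(t)$ for every such user, and \eref{eqepsi} then gives $\epsilon_u(R(t),R_u(t)) = 0$ on the whole summation domain. You have also correctly spotted the real difficulty, namely that plugging $\epsilon_u = 0$ into $\log\bigl(R(t)/(\epsilon_u N)\bigr)$ is ill-defined. The problem is that your proposed repair is aimed at the wrong quantity. You attribute the erasure factor in the per-user weight to $(1-\tilde{\epsilon}_u(t))^{-1}$ in \eref{ct}, but if you trace the derivation in \appref{app2}, that factor is only used to argue $\tilde{\epsilon}_u(t)\approx\tilde{\epsilon}_u(t-1)$ and hence that $\mathcal{C}_u(t)-\mathcal{C}_u(t-1)$ is $0$ or $N/R(t)$; it never survives into the objective. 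The $\epsilon_u(R,R_u)$ in the final weight comes instead from the Bernoulli probabilities $\mathds{P}\bigl(\mathcal{C}_u(t)-\mathcal{C}_u(t-1)=0\bigr)=\epsilon_u(R(t),R_u(t))$ of the \emph{non-targeted} users, which enter through the product $\prod_{u \in (\mathcal{K}_{R(t)} \cap \mathcal{N}) \setminus \tau_{\kappa(t)}} \epsilon_u(R(t),R_u(t))$ and are then transferred onto the targeted users by complementing the index set. Setting $\tilde{\epsilon}_u(t)=0$ in \eref{ct} and ``re-running'' the construction therefore does not remove $\epsilon_u$ from the weight: at the end of your re-run you would face exactly the same $\log\bigl(R(t)/(0\cdot N)\bigr)$ you set out to avoid.

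The paper's own proof takes a different (and blunter) route: it stays with the final formula of \lref{l1}, splits $\log\bigl(R(t)/(\epsilon_u N)\bigr) = \log\bigl(R(t)/N\bigr) - \log\bigl(\epsilon_u(R,R_u)\bigr)$, observes that $\epsilon_u(R,R_u)=0$ is the same constant for every targeted user independently of $\kappa(t)$ and $R(t)$, and discards the second term as not affecting the $\arg\max$. That step is itself loose (the discarded term is $-\log 0$ summed over a set whose cardinality depends on the decision variables), but it at least operates on the quantity that actually appears in the objective. To make your version work you would need to go back to the product over $(\mathcal{K}_{R(t)} \cap \mathcal{N}) \setminus \tau_{\kappa(t)}$ and handle the $0/1$ degeneracy of the perfect-estimation erasures there, rather than in \eref{ct}.
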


\begin{proof}
To establish the expression of the expected completion time increase in a system with perfect channel estimation, it is sufficient to note that for any message combination $\kappa(t)$ and any user $u \in \tau_{\kappa(t)}$, the transmission is instantly decodable for that user. Therefore, from the definition of the instant decodability of a message, the rate of the transfer satisfy $R(t) \leq R_u(t)$. Therefore, from the expression of the erasure probability in perfect channel estimation scenario \eref{eqepsi}, $\epsilon_u(R,R_u) = 0, \forall \ u \in \tau_{\kappa(t)}$, i.e., constant independent from both $\kappa(t)$ and $R(t)$. Therefore, the joint optimization problem can be written as:
\begin{align}
&\max_{ \substack{\kappa(t) \in \mathcal{P}(\mathcal{F}) \\ R(t) \in \mathcal{R}(t)}} \ \sum_{u \in \mathcal{K}_{R(t)} \cap \tau_{\kappa(t)}} \log \left(\frac{R(t)}{N} \right) - \log(\epsilon_u(R,R_u)) \nonumber \\
&= \max_{ \substack{\kappa(t) \in \mathcal{P}(\mathcal{F}) \\ R(t) \in \mathcal{R}(t)}} \ \sum_{u \in \mathcal{K}_{R(t)} \cap \tau_{\kappa(t)}} \log \left(\frac{R(t)}{N} \right)
\end{align}
\end{proof}

\section{Proposed Solution} \label{sec:pro}

This section proposes solutions to the approximated completion time reduction problem \eref{opti} by modeling it in the form of a graph that will be referred to as the rate-aware IDNC (RA-IDNC) graph. In the next two subsections, the time index notation is dropped, as it is known that the parameters are expressed for the $t$-th transmission.

\subsection{RA-IDNC Graph}

In the network-layer configuration of \cite{5683677}, the authors introduced the IDNC graph as a tool to determine all possible message combinations and identify the users that can instantly decode each of these combinations. This subsection extends the graph formulation to the rate-aware instantly decodable network coding scenario of interest in this paper. The presented graph formulation allows the identification of the messages combinations, transmission rate, and the set of users for those the transmission is instantly decodable. This graph will be termed as the RA-IDNC graph, denoted by $\mathcal{G}(\mathcal{V},\mathcal{E})$ ($\mathcal{V}$ and $\mathcal{E}$ being the set of vertices and edges of the graph, respectively), and constructed as follows.

To generate the set of vertices, first introduce the set of achievable rates $\mathcal{R}_u = \{ R \in \mathcal{R} \ | \ R \leq R_u\}$ for each user $u$. In other words, for each user, the set of achievable rate is the highest rates observed by other users whose rate region is a subset of that user. A vertex $v_{ufr}$ is created for each feasible association of user $u \in \mathcal{U}$, a wanted message $f \in \mathcal{W}_u$ and an achievable rate from $\mathcal{R}_u$. The set of edges $\mathcal{E}$ is generated by connecting two vertices $v_{u f r }$ and $v_{u^\prime f^\prime r^\prime }$ if they satisfy both following conditions:
\begin{itemize}
\item C1 : $r = r^\prime$.
\item C2 : $f = f^\prime$ OR $(f,f^\prime) \in \mathcal{H}_{u^\prime} \times \mathcal{H}_{u}$.
\end{itemize}

The connectivity condition C1 ensures that the rate of the transmission is the same for all adjacent vertices in the RA-IDNC graph. The connectivity condition C2 represents the instant decodability condition of IDNC from the network-layer perspective. The late can be decomposed in two sub-conditions namely $f = f^\prime$ that translates the fact that the same message is requested by distinct users and $(f,f^\prime) \in \mathcal{H}_{u^\prime} \times \mathcal{H}_{u}$ that states the instant decodability of the message mix $f \oplus f^\prime$ for both users $u$ and $u^\prime$.

Given the above construction rules, it can be readily inferred that each clique\footnote{A clique in an undirected graph is a set of vertices in which each two vertices are adjacent.} in the RA-IDNC graph represents a transmission both having:
\begin{itemize}
\item A message combination that is decodable to all the users designated by the clique's vertices.
\item One same rate that is smaller than the capacities of all the users identified by the clique's vertices.
\end{itemize}
Therefore, from Definition 1, the transmission represented by each clique in the RA-IDNC graph is instantly decodable for all the users designated by the clique's vertices. The following theorem characterizes the optimal solution to the approximated completion time reduction problem in \eref{opti} using the RA-IDNC graph:
\begin{theorem}
The optimal solution to the problem in \eref{opti} is the transmission that is represented by the maximum weight clique\footnote{In a weighted graph, the weight of a clique is defined as the sum of the individual weights of vertices belonging to the clique. The maximum weight clique problem is the one of finding the clique(s) with the maximum weight.} among all the maximal cliques\footnote{A maximal clique is a clique that is not a subset of any larger clique.} in the RA-IDNC graph, where the weight $w(v_{ufr})$ of each vertex $v_{ufr}$ is set to:
\begin{align}
w(v_{ufr}) =
\begin{cases}
\log \left( \frac{r}{\epsilon_u(r,R_u)N} \right) \hspace{0.5cm} &\text{if } u \in \mathcal{K}_{r} \\
0 \hspace{0.5cm} &\text{otherwise. }
\end{cases}
\label{weight}
\end{align}
\label{th2}
\end{theorem}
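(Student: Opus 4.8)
The plan is to establish a bijective correspondence between the feasible transmissions $(\kappa(t), R(t))$ in the optimization problem \eref{opti} and the maximal cliques in the RA-IDNC graph, and then to show that under the weight assignment \eref{weight}, the objective value of each transmission equals the weight of its corresponding clique. Once both facts are in place, maximizing the objective over all transmissions is identical to maximizing the weight over all maximal cliques, which yields the theorem.

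First I would make the correspondence explicit in both directions. Given any clique $\mathcal{Q}$ in the graph, condition C1 guarantees that all its vertices $v_{ufr}$ share a common rate $r$, so that $R(t) = r$ is well defined. Condition C2 guarantees that the associated messages can be XOR-combined into a single combination $\kappa(t)$ that is instantly decodable for every user indexed by a vertex of $\mathcal{Q}$; this is precisely the network-layer argument already invoked in the paragraph preceding the theorem, so I would cite that construction rather than re-derive it. Conversely, given a feasible transmission $(\kappa(t), R(t))$, every user $u$ for which it is instantly decodable contributes a vertex $v_{ufr}$ (with $f$ the unique wanted message $u$ recovers and $r = R(t)$), and these vertices form a clique because any two of them satisfy C1 and C2. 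Restricting attention to \emph{maximal} cliques corresponds to insisting that the transmission serve as many instantly-decodable users as the chosen combination and rate allow, which is exactly what the argmax in \eref{opti} selects.

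Next I would match the objectives term by term. The sum in \eref{opti} ranges over $u \in \mathcal{K}_{R(t)} \cap \tau_{\kappa(t)}$, contributing $\log\left(\frac{R(t)}{\epsilon_u(R,R_u)N}\right)$ for each such user. In the clique $\mathcal{Q}$ corresponding to $(\kappa(t), R(t))$, the set $\tau_{\kappa(t)}$ is precisely the set of users whose vertices appear in $\mathcal{Q}$, since a vertex is present exactly when the transmission is instantly decodable for that user. The weight \eref{weight} assigns each such vertex the value $\log\left(\frac{r}{\epsilon_u(r,R_u)N}\right)$ when $u \in \mathcal{K}_r$ and $0$ otherwise; summing over the clique therefore reproduces the objective, because vertices with $u \notin \mathcal{K}_{R(t)}$ contribute zero weight and are exactly the terms excluded from the sum in \eref{opti}. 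Thus $\sum_{v \in \mathcal{Q}} w(v)$ equals the objective of \eref{opti} evaluated at $(\kappa(t), R(t))$, completing the equivalence.

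The main obstacle I expect is the bookkeeping in the many-to-one aspect of the correspondence rather than any deep step. A single feasible transmission can be realized by several distinct cliques if some served user happens to have more than one admissible vertex (for instance when $f = f'$ lets multiple users be covered by the same message, or when a user's achievable-rate set permits the same service at the chosen $r$), so I would need to verify that taking the maximal clique does not double-count a user's weight and that the weights depend only on $(u, r)$ and not on any redundant structural choice. The zero-weight convention for $u \notin \mathcal{K}_r$ is what makes this clean: it ensures that enlarging a clique to maximality never decreases its weight and never introduces spurious positive contributions, so the maximum-weight maximal clique is well aligned with the argmax in \eref{opti}. I would close by noting that the perfect-estimation special case follows immediately by substituting $\epsilon_u(r,R_u) = 0$ into \eref{weight}, consistent with the corollary already proved.
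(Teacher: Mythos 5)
Your proposal is correct and follows essentially the same route as the paper's own proof in Appendix~C: first a one-to-one correspondence between maximal cliques and feasible triples of message combination, rate, and served users (with maximality playing exactly the role you identify, namely forcing the clique to capture \emph{all} users for whom the transmission is instantly decodable so that the vertex set equals $\tau_{\kappa}$), and then the term-by-term identification of the clique weight with the objective of \eref{opti}, the zero weight for $u \notin \mathcal{K}_r$ absorbing the users excluded from the sum. The double-counting concern you flag is resolved in the paper by the observation that two vertices $v_{ufr}$ and $v_{uf'r}$ of the same user with $f \neq f'$ violate C2 and hence cannot coexist in a clique, so each served user contributes exactly one vertex.
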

\begin{proof}
To prove this theorem, we first show that there exists a one-to-one mapping between the set of feasible message combination, transmission rate, and targeted users and the set of maximal cliques in the RA-IDNC graph. To conclude the proof, it is sufficient to show that the weight of the cliques is equivalent to the objective function of the optimization problem \eref{opti} under investigation. The complete proof can be found in \appref{app3}.
\end{proof}

The maximum weight clique problems are a well-known NP-hard problems. However, they can be solved more efficiently \cite{13265492,6607889} than the $ \mathcal{O}(|\mathcal{V}|^2\cdot 2^{|\mathcal{V}|})$ naive exhaustive search solving methods. Further, various approximate solutions \cite{9874286} yield, in general, acceptable outcomes. \fref{fig:graph} provides an example of the RA-IDNC graph for a network composed of $3$ users and $3$ messages. Note that in classical IDNC, the rate of the transmission should be the minimum rate of the targeted users by the transmission.

\begin{figure}[t]
\centering
\includegraphics[width=1\linewidth]{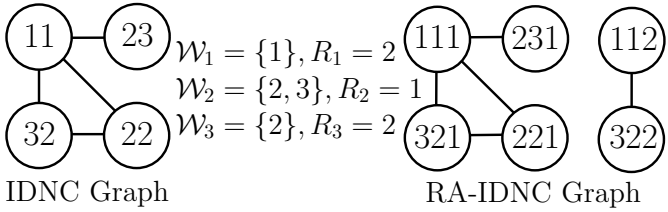}\\
\caption{Comparision between the classical IDNC graph and the RA-IDNC graph. In classical IDNC, the clique $\{11,22,32\}$ is selected because it serves $3$ users. However, the transmission provides $3$ bits/s (rate times number of trageted users) to the network. In RA-IDNC, the clique $\{112,322\}$ is selected bacause it provides $4$ bits/s.}\label{fig:graph}
\end{figure}

\subsection{Multi-Layer Solution}

\begin{algorithm}[t]
\begin{algorithmic}
\REQUIRE $\mathcal{U},\mathcal{F}, N , \mathcal{H}_u, \mathcal{W}_u, \mathcal{R}_u, \mathcal{C}_u(t-1), \ \forall \ u \in \mathcal{U}$.
\STATE Initialize $\mathbf{M} =\varnothing$.
\STATE Compute $\mathcal{K}^1 = \cup_{R \in \mathcal{R}} \{\mathcal{K}^1_{R}\}$.
\FOR{$k=1$ \TO $\cdots$}
\STATE Construct $\mathcal{G}^k(\mathbf{M})$.
\STATE Solve the maximum weight clique problem to yield $\mathbf{M}^k$.
\STATE Sets $\mathbf{M} = \mathbf{M} \cup \mathbf{M}^k$.
\STATE Compute $\mathcal{K}^{k+1}_R$, $R$ rate of $v \in \mathbf{M}$.
\ENDFOR
\STATE Output $\mathbf{M}$.
\end{algorithmic}
\caption{Multi-Layer RA-IDNC Graph}
\label{algo1}
\end{algorithm}

As shown in the previous subsection, the optimal solution to the optimization problem \eref{opti} can be obtained by identifying the maximum weight clique(s) among all maximal cliques in the RA-IDNC graph. This subsection improves upon the proposed solution. The improvement is fundamentally based on the principle that the optimal solution of \eref{opti} is potentially (in most cases) not unique. Therefore, this subsection relies on an efficient choice of one of the points that achieve the optimum of \eref{opti}. Such point is chosen by focusing on message combinations serving users that are the most likely to be decisive users in the next transmissions.

First, introduce the set of $k$-th critical users $\mathcal{K}_{R}^k$ as the set of users that can potentially increase the anticipated overall completion time if and only if they experience $k \geq 1$ consecutive time delay increases, i.e., they receive $k$ subsequent non-instantly decodable transmissions at a rate $R$, starting from the $t$-th transmission. The mathematical definition of this set is:
\begin{align}
\mathcal{K}_{R}^k = \left\{ u \notin \mathcal{K}_{R}^{k-1} ~\middle|~ \mathcal{C}_u(t-1) + \frac{kN}{R} \geq \mathcal{C}_{u^*}(t-1) \right\} , \nonumber
\end{align}
with $\mathcal{K}_{R}^1=\mathcal{K}_{R}$ defined in \eref{kk}.

Let $\mathcal{G}^1$ be the RA-IDNC graph constructed using only the users in the set $\mathcal{K}^1 = \cup_{R \in \mathcal{R}} \{\mathcal{K}^1_{R}\}$. In other words, a vertex $v_{ufr}$ is created for each feasible association of user $u \in \mathcal{K}^1$, a wanted message $f \in \mathcal{W}_u$ and an achievable rate in $\mathcal{R}_u$. The connectivity condition and the weights are the same as for the RA-IDNC graph introduced in the previous subsection. The following corollary characterizes the maximum weight clique among the maximal cliques in $\mathcal{G}^1$:
\begin{corollary}
The maximum weight clique among the maximal cliques in the $\mathcal{G}^1$ RA-IDNC graph yields the same weight as the one(s) in the $\mathcal{G}(\mathcal{V},\mathcal{E})$ RA-IDNC graph.
\label{cor1}
\end{corollary}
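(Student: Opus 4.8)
The plan is to prove the claimed equality of optimal weights by a double inequality, whose crux is the elementary observation that a vertex of the RA-IDNC graph can carry a non-zero weight only when its user is decisive for the vertex's own rate. By the weight rule \eref{weight}, $w(v_{ufr}) \neq 0$ forces $u \in \mathcal{K}_r$, and since $\mathcal{K}_r \subseteq \mathcal{K}^1 = \cup_{R \in \mathcal{R}} \mathcal{K}^1_R$ by the very definition of $\mathcal{K}^1$, every weight-bearing vertex of $\mathcal{G}(\mathcal{V},\mathcal{E})$ survives the restriction to $\mathcal{G}^1$. Equivalently, deleting the users outside $\mathcal{K}^1$ discards only vertices of weight zero, so it cannot affect the value of the objective in \eref{opti}, which by \thref{th2} equals the maximum clique weight.

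First I would settle the direction $\mathcal{G}^1 \le \mathcal{G}$. Because $\mathcal{G}^1$ is exactly the subgraph of $\mathcal{G}(\mathcal{V},\mathcal{E})$ induced by the users of $\mathcal{K}^1$, and both the connectivity conditions C1 and C2 and the weights \eref{weight} are inherited verbatim, any clique of $\mathcal{G}^1$ is a clique of $\mathcal{G}(\mathcal{V},\mathcal{E})$ with identical total weight; hence the maximum weight clique of $\mathcal{G}^1$ cannot exceed that of $\mathcal{G}(\mathcal{V},\mathcal{E})$. For the reverse inequality I would take a maximum weight clique $C^\star$ of $\mathcal{G}(\mathcal{V},\mathcal{E})$. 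By condition C1 all its vertices share a single rate $r^\star$, so any vertex of $C^\star$ whose user lies outside $\mathcal{K}^1$ also lies outside $\mathcal{K}_{r^\star}$ and therefore contributes weight zero. Intersecting $C^\star$ with the vertex set of $\mathcal{G}^1$ thus removes only zero-weight vertices and leaves a clique of $\mathcal{G}^1$ of the same total weight as $C^\star$, which gives the reverse bound; the two inequalities then force equality.

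The step I expect to require the most care is reconciling this with the \emph{maximal}-clique requirement of \thref{th2}. One must make explicit that the weight of a clique is invariant under inserting or deleting users that contribute zero, so that the weight-carrying sub-clique extracted from $C^\star$ can be completed back to a maximal clique of $\mathcal{G}^1$ without changing its weight, and, symmetrically, a maximal clique of $\mathcal{G}^1$ can be padded to a maximal clique of $\mathcal{G}(\mathcal{V},\mathcal{E})$ using only users outside $\mathcal{K}^1$, all of which are decisive for no rate and hence weightless. Everything ultimately rests on the single inclusion $\mathcal{K}_r \subseteq \mathcal{K}^1$ for every rate $r$, immediate from the definition of $\mathcal{K}^1$ in \eref{kk} as the union of the first-order decisive sets over all rates; once this is in hand, the equality of the two maximum clique weights follows.
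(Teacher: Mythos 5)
Your proposal is correct and follows essentially the same route as the paper: both arguments rest on the observations that $\mathcal{G}^1$ is the subgraph of $\mathcal{G}$ induced by the users of $\mathcal{K}^1$ and that every vertex of $\mathcal{G}\setminus\mathcal{G}^1$ has zero weight by \eref{weight}, so removing them cannot change the optimal clique weight. Your version is somewhat more careful than the paper's one-line argument, in that you make the two inequalities explicit and address the padding of cliques to maximal ones, but the underlying idea is identical.
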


\begin{proof}
The proof of this corollary is straightforward. It is sufficient to note that $\mathcal{G}^1 \subset \mathcal{G}$ and that $\forall \ v_{fur} \in \mathcal{G} \setminus \mathcal{G}^1$, we have $w(v_{ufr}) = 0$. Therefore, the weight of the maximum weight clique $\mathbf{M}^1$ in $\mathcal{G}^1$ as function of the maximum weight clique $\mathbf{M}$ in $\mathcal{G}$ is:
\begin{align}
w(\mathbf{M}^1) = w(\mathbf{M}) - \sum_{v_{ufr} \in \mathbf{M} \cap (\mathcal{G} \setminus \mathcal{G}^1)} w(v) = w(\mathbf{M}).
\end{align}
\end{proof}

Let $\mathbf{M}^1$ be the maximum weight clique among the maximal cliques in $\mathcal{G}^1$. From \corref{cor1}, each set of vertices in $\mathcal{G} \setminus \mathcal{G}^1$ connected to the vertices in $\mathbf{M}^1$ form a global optimal solution to \eref{opti}. Thus, the chosen clique $\mathbf{M}$ is initialized to $\mathbf{M} = \mathbf{M}^1$. In order to prioritize users that can potentially increase the anticipated completion time after two transmissions, let $\mathcal{G}^2(\mathbf{M})$ be the RA-IDNC graph constructed using vertices in $\mathcal{K}^2_R$ that are connected to all the vertices in $\mathbf{M}$, where $R$ is the rate of users selected in $\mathbf{M}$. After two transmissions, such vertices can potentially become decisive users with equal weight equal to $\log ( R/N )$. Hence, the weight of all the vertices in $\mathcal{G}^2(\mathbf{M})$ is $\log (R/N )$. The maximum weight clique problem (which is equivalent to a maximum clique problem in this case) is solved in $\mathcal{G}^2(\mathbf{M}^1)$ to produce the clique $\mathbf{M}^2$. The clique $\mathbf{M}^2$ is merged with $\mathbf{M}$ to form the updated clique $\mathbf{M} = \mathbf{M}^1 \cup \mathbf{M}^2$. The process is, then, repeated by generating $\mathcal{G}^3(\mathbf{M})$ and solving the maximum (weight) clique problem in it, and so forth until no further vertices in $\mathcal{G}$ are adjacent to all the vertices in $\mathbf{M}$. The steps of the algorithm are summarized in \algref{algo1}.

\section{Simulation Results} \label{sec:sim}

\begin{table}
\centering
\caption{Simulation parameters}
\begin{tabular}{|c|c|}
\hline
Cellular Layout & Hexagonal \\
\hline
Cell Diameter & 500 meters \\
\hline
Channel Model & SUI-3 Terrain type B \\
\hline
Channel Estimation & Perfect \\
\hline
High Power & -42.60 dBm/Hz \\
\hline
Background Noise Power & -168.60 dBm/Hz \\
\hline
SINR Gap $\Gamma$ & 0dB\\
\hline
Bandwidth & 10 MHz \\
\hline
\end{tabular}
\label{t1}
\end{table}

This section shows the performance of the proposed cross-layer algorithm in the downlink of a radio access network. The cell layout is hexagonal with $500$m of diameter. The BS is placed in the center of the cell, and the users uniformly distributed inside of it. The number of users, messages and the message size change in the simulation so as to study multiple scenarios. The shadowing variance is fixed to $0$dB in all the simulation except for the last one in which we consider the effect of such parameter. \tref{t1} summarizes the additional simulations parameters. The proposed rate aware IDNC is compared to the following schemes:
\begin{itemize}
\item The uncoded broadcast in which the message are transmitted uncoded to all users using their minimum achievable capacity as the rate of the transmission.
\item The uncoded unicast in which the user with maximum achievable capacity and non-empty Wants set is the only targeted user.
\item The classical IDNC in which the message combination is chosen according to the policy proposed in \cite{7037599}. For the transmission to be instantly decodable for all the targeted users, the transfer rate is the minimum achievable capacity of these users.
\end{itemize}

\begin{figure}[t]
\centering
\includegraphics[width=1\linewidth]{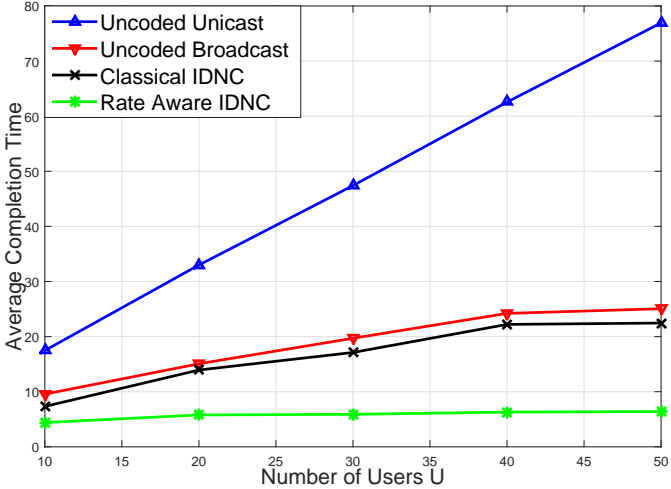}\\
\caption{Completion time in sec. versus the number of users $U$ for a network composed of $F=20$ messages, and a message's size $N=1$ Mb.}\label{fig:U}
\end{figure}

\begin{figure}[t]
\centering
\includegraphics[width=1\linewidth]{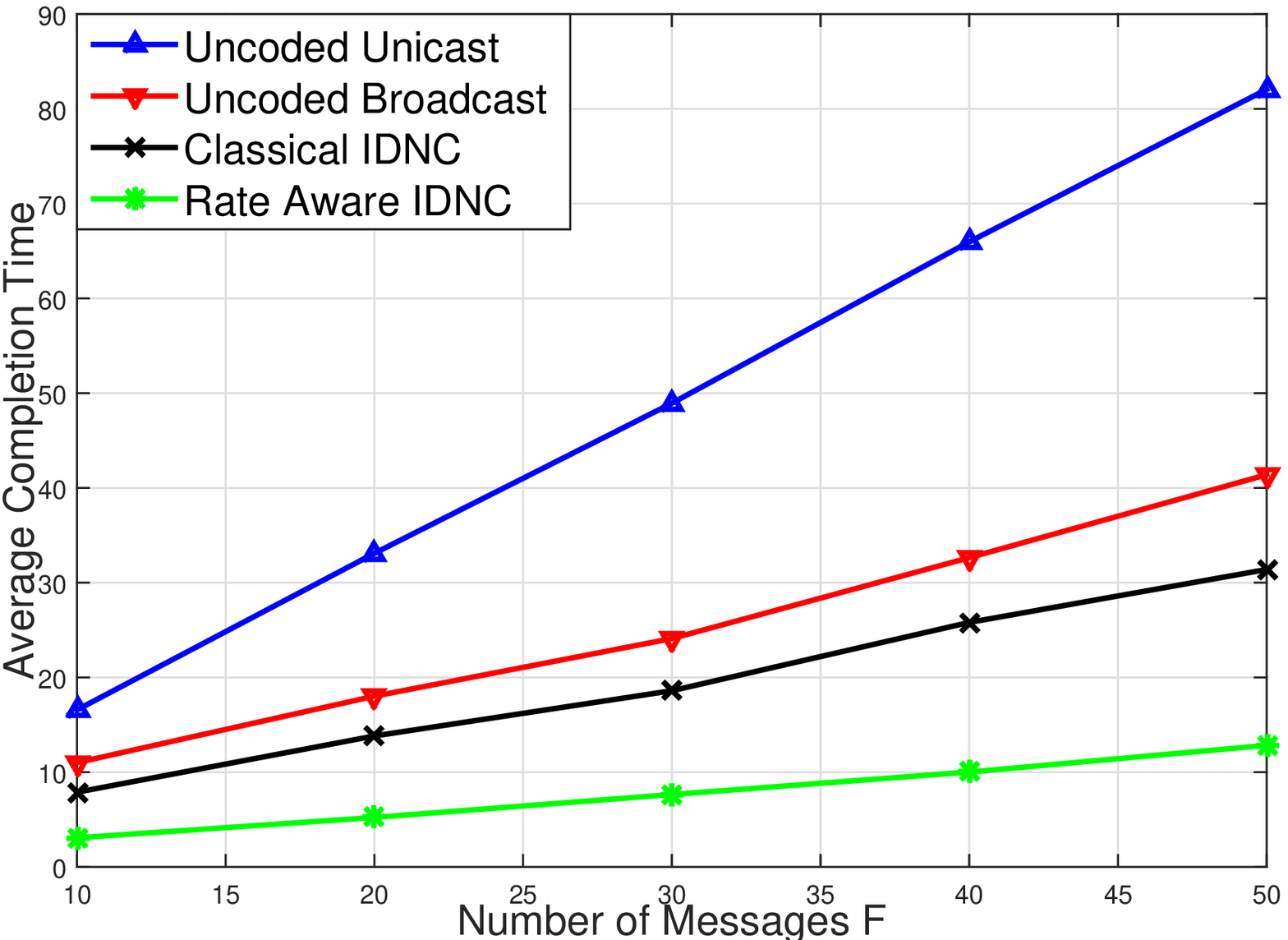}\\
\caption{Completion time in sec. versus the number of messages $F$ for a network composed of $U=20$ users and a message's size $N=1$ Mb.}\label{fig:F}
\end{figure}

\begin{figure}[t]
\centering
\includegraphics[width=1\linewidth]{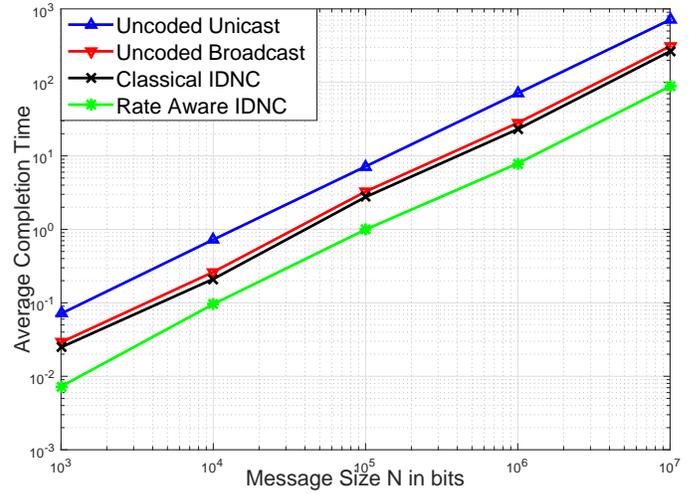}\\
\caption{Completion time in sec. versus the message's size $N$ for a network composed of $U=20$ users, and $F=30$ messages.}\label{fig:N}
\end{figure}

\begin{figure}[t]
\centering
\includegraphics[width=1\linewidth]{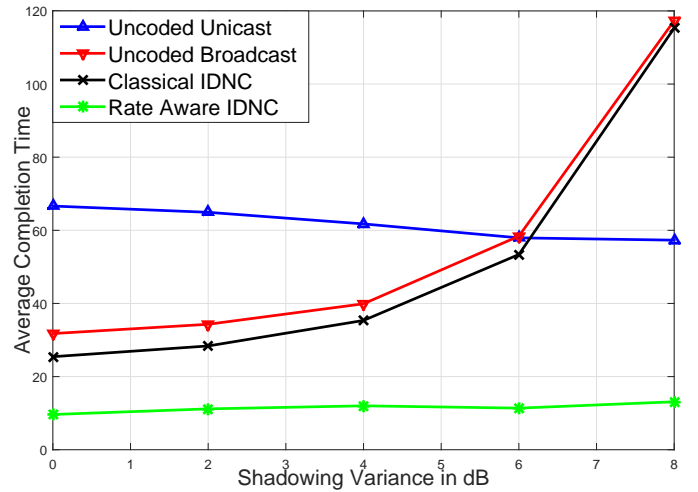}\\
\caption{Completion time in sec. versus the shadowing variance for a network composed of $U=20$ users, $F=40$ messages, and a message's size $N=1$ Mb.}\label{fig:V}
\end{figure}

First, \fref{fig:U} plots the completion time versus the number of users $U$ for a network composed of $F=20$ messages and a message's size $N=1$ Mb. We can clearly see that our proposed rate aware IDNC scheme largely outperforms the other schemes. The gap between our RA-IDNC and the classical IDNC increases as the number of users increases. This can be explained by the fact that as the number of users increases, the conventional IDNC scheme have more coding opportunities and thus the number of targeted users increases. This results in a lower transmission rate as it represents the minimum of an increasing set. The same thinking applies to the broadcast strategy.

\fref{fig:N} illustrates the completion time versus the number of messages $F$ for a network composed of $U=20$ users and a message's size $N=1$ Mb. Again, our proposed RA-IDNC outperforms the strategy without rate adaptation. We also note that the uncoded unicast scheme increases linearly with both the number of users and the number of messages. This can be explained by the nature of the scheme. As it targets only one user with one message at each transmission, it needs $UF$ transmissions before completion. Therefore, the performance of such systems is linear with the number of users (\fref{fig:U}) and the number of messages (\fref{fig:F}). The performance of the broadcast strategy is also linear with the number of messages since it requires $F$ transmissions to deliver all the messages.

\fref{fig:F} displays the completion time versus the message's size $N$ for a network composed of $U=20$ users, and $F=30$ messages. As hinted by \eref{eq25}, the completion time is the sum of a linear function of the message size and the time delay that is also linear with the message size. This result applies for any transmission scheme that explains that all the policies are linear with respect to such parameter.

Finally, to quantify the performance of the proposed scheme in high shadowing environment, \fref{fig:V} plots the completion time in sec. versus the shadowing variance in dB for a network composed of $U=20$ users, $F=40$ messages, and a message's size $N=1$ Mb. As the shadowing variance increases, the variability of the achievable capacity of different users increases. Since the unicast policy takes advantage of the maximum possible capacity, its performance improves as the shadowing variance increases. However, the classical and broadcast scheme are penalized with the minimum achievable capacity that decreases as the shadowing variance increases which explains the degradation in performance. As we can see from \fref{fig:V}, our proposed scheme strikes a balance between the number of targeted users and the transmission rate that explain its relatively constant performance against the shadowing variance variation.

\section{Conclusion} \label{sec:con}

This paper proposes solving the completion time reduction problem using instantly decodable network codes with rate adaptation. Unlike previous works, this paper suggests a cross-layer scheme that considers rate awareness in the transmission selection process so as to minimize further the completion time from its best-known value so far. The delivery time minimization problem is first shown to be intractable which motivates its approximation by the problem of reducing, at each transmission, the increase of an anticipated version of the completion time. The paper shows that the approximate problem is equivalent to a maximum weight clique problem in the newly constructed RA-IDNC. To further improve the performance of the proposed scheme, the paper introduces a multi-layer solution to refine the approximation. Simulation results suggest that the cross-layer design largely outperforms the uncoded transmissions strategies and the classical IDNC scheme.

\appendices

\numberwithin{equation}{section}

\section{Proof of \thref{th1}}\label{app1}

In this section, the optimal schedule is expressed as an optimization problem involving the time delay of users. First note that:
\begin{align}
\mathcal{S}^* &= \arg \min_{ \mathcal{S} \in \mathbb{S}} \left\{\mathcal{C}(\mathcal{S})\right\} = \arg\min_{\mathcal{S} \in \mathbb{S}}\left\{ \max_{u\in\mathcal{M}}\left\{ \mathcal{C}_u(\mathcal{S}) \right\}\right\}\;.
\end{align}
Therefore, to show this theorem it sufficient to show that the individual completion time can be approximated using the following expression:
\begin{align}
\mathcal{C}_u(\mathcal{S}) \approx \left( \cfrac{N|\mathcal{F}|}{\tilde{R}_u(\mathcal{S})} + \mathcal{T}_u(S) \right) \left( 1 - \overline{\epsilon}_u \right)^{-1}
\end{align}
In order to establish such approximation, this section first demonstrate that the individual time may be written as follows:
\begin{align}
\mathcal{C}_u(\mathcal{S}) = \cfrac{N|\mathcal{F}|}{\tilde{R}_u(\mathcal{S})} + \mathcal{T}_u(S) + \mathcal{E}_u(S),
\end{align}
where $\mathcal{E}_u(S)$ is the cumulative time of the transmissions that are erased at the $u$-th user transmitted at a rate $R \leq R_u$.

For a schedule $\mathcal{S}$ of message combination $\kappa(t)$ and transmit rates $R(t), 1 \leq t \leq |\mathcal{S}|$, the completion time $\mathcal{C}_u$ of user $u$ can be expressed as:
\begin{align}
\mathcal{C}_u(\mathcal{S}) = \sum_{t=1}^{n_u(\mathcal{S})} \cfrac{N(\kappa(t))}{R(t)},
\label{ctu}
\end{align}
where $N(\kappa(t))$ is the size of the message combination $\kappa$ and $n_u(\mathcal{S})$ is the first transmission index verifying $\mathcal{W}=\varnothing$. Since the messages are mixed using binary XOR, then the size of any message mixture is $N$. For instance $N(\kappa(t))=N$.

Let the side information of the base-station be stored in an $U \times F$ matrix $\mathbf{S}=[S_{uf}]$ as follows:
\begin{align}
S_{uf} =
\begin{cases}
0 \hspace{0.5cm} &\text{if }f \in \mathcal{H}_u \\
1 \hspace{0.5cm} &\text{if }f \in \mathcal{W}_u
\end{cases}, \ \forall \ (u,f) \in \mathcal{U} \times \mathcal{F},
\end{align}
where the notation $\mathcal{X} \times \mathcal{Y}$ refers to the Cartesian product of the two sets $\mathcal{X}$ and $\mathcal{Y}$. Let $S_u$ be the side information of user $u \in \mathcal{U}$, i.e., $S_u$ is the $u$th row of matrix $\mathbf{S}$.

The expression \eref{ctu} of the individual completion time of user $u$ given the schedule $\mathcal{S}$ of message combinations and the transmission rate can be formulated as follows:
\begin{align}
\mathcal{C}_u(\mathcal{S}) = \sum_{t \in \alpha_u(\mathcal{S}) } \cfrac{N(\kappa(t))}{R(t)} + \sum_{t \in \beta_u(\mathcal{S}) } \cfrac{N(\kappa(t))}{R(t)} + \sum_{t \in \gamma_u(\mathcal{S}) } \cfrac{N}{R(t)},
\label{eql11}
\end{align}
where the sets $\alpha_u(\mathcal{S})$, $\beta_u(\mathcal{S})$, and $\gamma_u(\mathcal{S})$ are the sets of the transmission index that are, respectively, received, instantly decodable, and non-instantly decodable for user $u$ and those erased at that user. The definition of these sets is given by:
\begin{align}
\alpha_u(\mathcal{S}) &= \{ t \leq n_u(\mathcal{S}) \ | \ S_u(t)^T k(t) = 1 \text{ and } R(t) \leq R_u(t)\} \nonumber \\
\beta_u(\mathcal{S}) &= \{ t \leq n_u(\mathcal{S}) \ | \ S_u(t)^T k(t) \neq 1 \text{ or } R(t) > R_u(t)\} \nonumber \\
\gamma_u(\mathcal{S})&= \{ t \leq n_u(\mathcal{S}) \ | \ \mathcal{X}_u(t) =1 \text{ and } R(t) \leq R_u(t) \},
\end{align}
where the notation $X^T$ refers to the transpose of vector $X$ and $\mathcal{X}_u(t)=1$ is the erasure variable that indicates that a message is erased at time $t$ for $R(t) \leq R_u(t)$.

To prove this theorem, we first show that, for any schedule $\mathcal{S}$, the cardinality of $\alpha_u(\mathcal{S})$ is $|\mathcal{F}|$, that $\sum_{t \in \beta_u(\mathcal{S}) } \cfrac{N}{R(t)}=\mathcal{T}_u(\mathcal{S})$, and that $\sum_{t \in \gamma_u(\mathcal{S}) } \cfrac{N}{R(t)} = \mathcal{E}_u(\mathcal{S})$. To conclude the proof, we substitute $\{R(t)\}_{t \in \alpha_u(\mathcal{S})}$ by its harmonic mean $\tilde{R}_u(\mathcal{S})$ and we approximate $\mathcal{E}_u(\mathcal{S})$.

Let $t^0_{\alpha}$ be the smallest time index in the set $\alpha_u(\mathcal{S})$. For $t < t^0_{\alpha_u}$, it is clear that $t \notin \alpha_u(\mathcal{S})$ and hence the transmissions are non-instantly decodable for user $u$. Therefore, for $t < t^0_{\alpha_u}$, the Wants set of user $u$ is unchanged, i.e., $\mathcal{W}_u(t) = \mathcal{F}$. Now let $t_{\alpha}$ and $t^\prime_{\alpha}$ be any two consecutive time indices in the set $\alpha_u(\mathcal{S})$. Using the same steps as above, it is clear that $\mathcal{W}_u(t) = \mathcal{W}_u(t_{\alpha})$ for any $t_{\alpha} \leq t < t^\prime_{\alpha}$. The transmission at time $t^\prime_{\alpha}$, being an instantly decodable one, provides a new message for user $u$. Therefore, it reduces its Wants set by one unit, i.e., $\mathcal{W}_u(t^\prime_{\alpha}) = \mathcal{W}_u(t_{\alpha})-1$. Finally, from the definition of $n_u(\mathcal{S})$, we have $\mathcal{W}_u = \varnothing$. Since $n_u(\mathcal{S})$ is the smallest index making the Wants set empty, then it is easy to establish that $n_u(\mathcal{S}) \in \alpha_u(\mathcal{S})$ which concludes that
\begin{align}
|\alpha_u(\mathcal{S})| = |\mathcal{F}|.
\label{eql12}
\end{align}

Let $t < n_u(\mathcal{S})$ be a time index before the completion time of user $u$. From the analysis above, three scenarios can be distinguished:
\begin{itemize}
\item $t \in \alpha_u(\mathcal{S})$: The message is instantly decodable, and hence the user do not experience any delay.
\item $t \in \beta_u(\mathcal{S})$: The message is non-instantly decodable and $\mathcal{W}_u \neq \varnothing$ (since $t < n_u(\mathcal{S})$). Such transmission increases the time delay by $N/R(t)$.
\item $t \in \gamma_u(\mathcal{S})$: The message is erased, and hence the user do not experience any delay.
\end{itemize}
Given that $t \geq n_u(\mathcal{S})$ the user have an empty Wants set and by definition of the time delay, such user will not experience any additional delay. Therefore, we conclude that:
\begin{align}
\sum_{t \in \beta_u(\mathcal{S}) } \cfrac{N}{R(t)}=\mathcal{T}_u(\mathcal{S}).
\label{eql13}
\end{align}
By definition of the set of erased transmissions $\mathcal{E}_u(\mathcal{S})$ and the set of transmission index $\gamma_u(\mathcal{S})$, it can easily be concluded that
\begin{align}
\sum_{t \in \gamma_u(\mathcal{S}) } \cfrac{N}{R(t)} = \mathcal{E}_u(\mathcal{S}).
\label{eql15}
\end{align}

Let $\tilde{R}_u(\mathcal{S})$ be the harmonic mean of $\{R(t)\}_{t \in \alpha_u(\mathcal{S})}$. The relationship linking the two quantities can be written as:
\begin{align}
\cfrac{|\alpha_u(\mathcal{S})|}{\tilde{R}_u(\mathcal{S})} = \sum_{t \in \alpha_u(\mathcal{S}) } \cfrac{1}{R(t)}.
\label{eql14}
\end{align}

Substituting the equalities \eref{eql12}, \eref{eql13}, \eref{eql15}, and \eref{eql14} in the expression \eref{eql11} of the individual completion time of user $u$ yields:
\begin{align}
\mathcal{C}_u(\mathcal{S}) = \cfrac{N|\mathcal{F}|}{\tilde{R}_u(\mathcal{S})} + \mathcal{T}_u(S) + \mathcal{E}_u(S).
\label{eql16}
\end{align}

Now we approximate $\mathcal{E}_u(S)$ by the quantity $\overline{\epsilon}_u\mathcal{C}_u(\mathcal{S})$. The quantity in \eref{eql15} can be rewritten as follows:
\begin{align}
\mathcal{E}_u(\mathcal{S}) = \sum_{t \in \gamma_u(\mathcal{S}) } \cfrac{N}{R(t)} = \sum_{t =1}^{n_u(\mathcal{S})} \mathcal{X}_u(t) \cfrac{N}{R(t)}.
\end{align}

Let $\mathcal{Y}_u(t)$ be a Bernoulli random variable whose argument is $\epsilon_u(R(t),R_u(t))$. It can easily be seen that $\mathcal{X}_u(t)$ are realizations of $(\mathcal{Y}_u(t)|R(t)\leq R_u(t))$. Let $\overline{\epsilon}_u = \mathds{E}_{R(t)\leq R_u(t)}[\epsilon_u(R(t),R_u(t))]$ and let $\overline{\mathcal{Y}}_u$ be a Bernoulli random variable whose argument is $\overline{\epsilon}_u$. This paper approximate $\mathcal{X}_u(t)$ as realizations of $\overline{\mathcal{Y}}_u$. The rational behind such approximation is that it hold for any $\epsilon_u(R(t),R_u(t))$ equal to a constant for $R(t)\leq R_u(t)$. Such approximation allows to write the average of $\mathcal{E}_u(\mathcal{S})$ as follows:
\begin{align}
\mathds{E}[\mathcal{E}_u(\mathcal{S})] &= \sum_{t =1}^{n_u(\mathcal{S})} \mathds{E}[\mathcal{X}_u(t)] \cfrac{N}{R(t)} \nonumber \\
&= \sum_{t =1}^{n_u(\mathcal{S})} \overline{\epsilon}_u \cfrac{N}{R(t)} \nonumber \\
&= \overline{\epsilon}_u \sum_{t =1}^{n_u(\mathcal{S})} \cfrac{N}{R(t)}.
\label{eql17}
\end{align}
By definition of $\mathcal{C}_u(\mathcal{S})$ in \eref{ctu}, we have $\mathds{E}[\mathcal{E}_u(\mathcal{S})] = \overline{\epsilon}_u\mathcal{C}_u(\mathcal{S})$. Finally, using the weak law of large number, we approximate the quantity $\mathcal{E}_u(S)$ by its mean $\mathds{E}[\mathcal{E}_u(\mathcal{S})]$. Substituting \eref{eql17} in \eref{eql16} and rearranging the terms yields the final expression:
\begin{align}
\mathcal{C}_u(\mathcal{S}) \approx \left( \cfrac{N|\mathcal{F}|}{\tilde{R}_u(\mathcal{S})} + \mathcal{T}_u(S) \right) \left( 1 - \overline{\epsilon}_u \right)^{-1}
\end{align}

\section{Proof of \lref{l1}} \label{app2}

Let $\mathcal{K}_{R(t)}$ be the set of users that can potentially increase the anticipated overall completion time in the $t$-th transmission at the rate $R(t)$. The mathematical expression of the set is:
\begin{align}
\mathcal{K}_{R(t)} = \left\{ u \in \mathcal{U} \ | \ \mathcal{C}_u(t) \geq \mathcal{C}_{u^*}(t-1) \right\}.
\end{align}
Assume that the $t$-th transmission at the rate $R(t)$ is erased at the $u$-th user, then the harmonic mean of the successfully received instantly decodable transmission is unchanged, i.e., $\tilde{R}_u(t-1) = \tilde{R}_u(t)$. The relationship linking the average transmission erasures $\tilde{\epsilon}(t-1)$ and $\tilde{\epsilon}(t)$ of the $u$-th user is the following:
\begin{align}
\tilde{\epsilon}(t) &= \frac{1}{t} \sum_{k=1}^t \epsilon(R(j),R_u(j)) \nonumber \\
&= \frac{t-1}{t} \frac{1}{t-1} \sum_{k=1}^{t-1} \epsilon(R(j),R_u(j)) + \frac{\epsilon(R(t),R_u(t))}{t} \nonumber \\
&= \frac{t-1}{t}\tilde{\epsilon}(t-1) + \frac{\epsilon(R(t),R_u(t))}{t}
\label{era}
\end{align}
For a large number of transmissions, the quantity $\frac{t-1}{t}$ can be approximated be $1$ and as the erasure $0 \leq \epsilon(R(t),R_u(t)) \leq 1,\ \forall\ t>0$, then the $\frac{\epsilon(R(t),R_u(t))}{t} \approx 0$ for large $t$. Therefore, the average erasure probability at the $t$-th transmission is approximatively equal to $\tilde{\epsilon}(t-1)$. Therefore, if the transmission is erased at the $u$-th user, its anticipated completion time does not change, i.e., $\mathcal{C}_u(t) = \mathcal{C}_{u}(t-1)$. In other words, such user is unable to increase the maximum anticipated completion time from its currently maximal value.

Now assume that the $t$-th transmission is successfully received at the $u$-th user. Two scenarios can be distinguished depending on the instant decodability of the transmission. It is clear that if the transmission is non-instantly decodable for a user $u$ that user, then the average rate of instantly decodable transmission is unchanged, i.e., $\tilde{R}_u(t-1) = \tilde{R}_u(t)$. Therefore, using the approximation of the erasure \eref{era}, the anticipated completion time of user $u$ can be written as:
\begin{align}
\mathcal{C}_u(t) = \mathcal{C}_u(t-1) + N/R(t)
\end{align}
Now assume that the transmission is instantly decodable and successfully received. If the transmission is instantly decodable, then the relationship linking $\tilde{R}(t)$ and $\tilde{R}(t-1)$ is the following:
\begin{align}
\cfrac{1}{\tilde{R}(t)} = \cfrac{n-1}{n\tilde{R}(t-1)} + \cfrac{1}{nR},
\label{eq1}
\end{align}
where $n$ is the number of instantly decodable transmissions received, so far, by user $u$. We approximate $\tilde{R}(t)$ by $\tilde{R}(t-1)$ as it indeed holds for large number of messages $\mathcal{F}$ as shown in \eref{eq1}. Assuming that $\tilde{R}_u(t-1) \approx \tilde{R}_u(t)$ and $\tilde{\epsilon}(t-1) \approx \tilde{\epsilon}(t)$, the anticipated completion time of user $u$ remains unchanged, i.e., $\mathcal{C}_u(t) = \mathcal{C}_u(t-1)$.

According to our previous analysis, the individual completion time of user $u$ after a successfully received transmission of the message combination $\kappa(t)$ with the rate $R(t)$ can be written as:
\begin{align}
\mathcal{C}_u(t) =
\begin{cases}
\mathcal{C}_u(t-1) \hspace{0.5cm} &\text{if } u \in \tau_{\kappa(t)} \\
\mathcal{C}_u(t-1) + N/R(t) \hspace{0.5cm} &\text{if } u \notin \tau_{\kappa(t)}.
\end{cases}
\end{align}
Therefore, the set of users that can potentially increase the maximum individual completion time can be reformulated as:
\begin{align}
\mathcal{K}_{R(t)} = \left\{ u \in \mathcal{U} ~\middle|~ \ \mathcal{C}_u(t-1) + \frac{N}{R(t)} \geq \mathcal{C}_{u^*}(t-1) \right\}.
\end{align}

Let $\mathbf{A}$ be the event that the maximum anticipated completion time increases. The probability of such event can be expressed as follows:
\begin{align}
\mathds{P}(\mathbf{A}) &= \mathds{P}(\max_{ u \in \mathcal{U}} (\mathcal{C}_u(t)) > \max_{ u \in \mathcal{U}}(\mathcal{C}_u(t-1))) \nonumber \\
&= 1 - \mathds{P}(\max_{ u \in \mathcal{U}} (\mathcal{C}_u(t)) = \mathcal{C}_{u^*}(t-1)).
\end{align}

By the construction of the set $\mathcal{K}_{R(t)}$, a user $u \notin \mathcal{K}(t)$ is unable to increase the maximum individual completion time in the $t$-th transmission at rate $R(t)$. Hence, for a transmission at rate $R(t)$, the probability of the event $\mathbf{A}$ is:
\begin{align}
\mathds{P}(\mathbf{A}) &= 1 - \mathds{P}(\max_{ u \in \mathcal{K}_{R(t)}} (\mathcal{C}_u(t)) = \mathcal{C}_{u^*}(t-1)) \nonumber \\
&= 1 - \prod_{u \in \mathcal{K}_{R(t)}}\mathds{P}( \mathcal{C}_u(t) - \mathcal{C}_u(t-1) = 0)
\end{align}
From the previous analysis, the anticipated completion time for a user to whom the transmission is instantly decodable is $0$ regardless is the transmission is actually successfully received or erased. However, for users to whom the transmission is not instantly decodable, the increase in the anticipated completion time occurs if and only if the transmission is successfully received since $\tilde{\epsilon}(t-1) \approx \tilde{\epsilon}(t)$. Therefore, due to the dynamic nature of the channel realization, from the previous analysis, the increase in the anticipated completion time for user $u \notin \tau_{\kappa(t)}$ with non-empty Wants set, is a Bernoulli random variable that takes the following values:
\begin{align}
&\mathcal{C}_u(t) - \mathcal{C}_u(t-1) = \nonumber \\
& \qquad \begin{cases}
0 &\text{with probability } \epsilon(R(t),R_u(t)) \\
N/R(t) &\text{with probability } 1-\epsilon(R(t),R_u(t))
\end{cases}
\end{align}
Therefore, the probability that the maximum anticipated completion time increases at the $t$-th transmission as compared with the $t-1$-th transmission can be expressed as follows:
\begin{align}
\mathds{P}(\mathbf{A}) &= 1 - \prod_{u \in (\mathcal{K}_{R(t)} \cap \mathcal{N}) \setminus \tau_{\kappa}} \epsilon(R(t),R_u(t)),
\end{align}
where $\mathcal{N}=\{u \in \mathcal{U} \ | \ \mathcal{W}_u \neq \varnothing\}$ is the set of users with non-empty Wants set. Finally, the message combination $k(t)$ and the transmission $R(t)$ that minimizes the probability of increase in the anticipated completion time for the $t$-th transmission can be expressed as follows:
\begin{align}
&\min_{ \substack{\kappa(t) \in \mathcal{P}(\mathcal{F}) \\ R(t) \in \mathcal{R}(t)}} 1 - \prod_{u \in (\mathcal{K}_{R(t)} \cap \mathcal{N}) \setminus \tau_{\kappa(t)}} \epsilon(R(t),R_u(t)) \nonumber \\
&= \max_{ \substack{\kappa(t) \in \mathcal{P}(\mathcal{F}) \\ R(t) \in \mathcal{R}(t)}} \prod_{u \in (\mathcal{K}_{R(t)} \cap \mathcal{N}) \setminus \tau_{\kappa(t)}} \epsilon(R(t),R_u(t)).
\end{align}
Therefore, the expected increase in the completion time reduction problem can be approximated by the following joint optimization over the message combination $\kappa(t)$ and the transmission rate $R(t)$:
\begin{align}
&(\kappa^*(t), R^*(t)) \\
& =\arg \max_{ \substack{\kappa(t) \in \mathcal{P}(\mathcal{F}) \\ R(t) \in \mathcal{R}(t)}} \prod_{u \in (\mathcal{K}_{R(t)} \cap \mathcal{N}) \setminus \tau_{\kappa(t)}} \frac{\epsilon(R(t),R_u(t))N}{R(t)} \nonumber \\
& = \arg \max_{ \substack{\kappa(t) \in \mathcal{P}(\mathcal{F}) \\ R(t) \in \mathcal{R}(t)}} \sum_{u \in (\mathcal{K}_{R(t)} \cap \mathcal{N}) \setminus \tau_{\kappa(t)}} \log (\frac{\epsilon(R(t),R_u(t))N}{R(t)}) \nonumber \\
& = \arg \min_{ \substack{\kappa(t) \in \mathcal{P}(\mathcal{F}) \\ R(t) \in \mathcal{R}(t)}} \sum_{u \in (\mathcal{K}_{R(t)} \cap \tau_{\kappa(t)})} \log (\frac{\epsilon(R(t),R_u(t))N}{R(t)}) \nonumber \\
& = \arg \max_{ \substack{\kappa(t) \in \mathcal{P}(\mathcal{F}) \\ R(t) \in \mathcal{R}((t))}} \sum_{u \in \mathcal{K}_{R(t)}\cap \tau_{\kappa(t)}} \log (\frac{R(t)}{\epsilon(R(t),R_u(t))N}).\nonumber
\end{align}

\section{Proof of \thref{th2}} \label{app3}

To prove this theorem, this section first shows that there exists a one-to-one mapping between the set of feasible message combination, transmission rate, and targeted users and the set of maximal cliques in the RA-IDNC graph. To conclude the proof, it is sufficient to show that the weight of the cliques is equivalent to the objective function of the optimization problem \eref{opti} under investigation. Therefore, this section proposes first to show that for any maximal clique, there exist a unique combination message combination, transmission rate, and targeted users. Afterward, the converse is shown by proving that every feasible message combination, transmission rate, and targeted users is represented by a unique maximal clique in the RA-IDNC graph.

Let the side information matrix $\mathbf{S}=[S_{uf}]$ and the individual side information $S_u$ of user $u$ be defined as in \appref{app1}. Let $\mathbf{M}$ be a maximal clique in the RA-IDNC graph. We show that the transmission of the message mix $\kappa = \bigoplus\limits_{v_{ufr} \in \mathbf{M}} f$ at the transmission rate $R=r, v_{ufr} \in \mathbf{M}$ is instantly decodable for the users $\tau_\kappa = \left\{ u \right\}_{v_{ufr} \in \mathbf{M}}$. First note that the combination of message combination, rate and users is well-defined. The messages $f \in \mathcal{F}, \ \forall \ v_{ufr} \in \mathbf{M}$, thus they are combinable with the XOR combination to produce an IDNC-coded message. Moreover, $u \in \mathcal{U}, v_{ufr} \in \mathbf{M}$ by the construction of the vertices. Further, since $\mathbf{M}$ is a clique then all the vertices are connected, i.e., they satisfy C1. In other words, $R=r=r^\prime, , \ \forall \ v_{ufr},v_{u^\prime f^\prime r^\prime } \in \mathbf{M}$.

To show that the transmission represented by the clique $\mathbf{M}$ is a feasible transmission, we only need to demonstrate that the message mix $\kappa$ transmitted at the rate $R$ is instantly decodable for all users $u \in \tau_{\kappa}$. First define the set $\alpha$ of user that can instantly decode the transmission $(\kappa,R)$ as follows
\begin{align}
\alpha(\kappa,R) = \{ u \in \mathcal{U} \ | \ S_u^T k = 1 \text{ and } R \leq R_u\} ,
\label{eqth11}
\end{align}
where the notation $X^T$ refers to the transpose of vector $X$. We want to show that $\alpha(\kappa,R) = \tau_{\kappa}$. We begin by showing the first inclusion ($\tau_{\kappa} \in \alpha(\kappa,R)$). Afterwards, we show the second one.

Let $u \in \tau_{\kappa}$ then it is clear that $R \leq R_u$ by the construction of the vertices. It is also clear to note, by the construction of the vertices, that $S_u^T k \neq 0$ since each vertex $v_{ufr}$ contains a message $f \in \mathcal{W}_u$. Assume that $S_u^T k > 1$, therefore $\exists \ f \neq f^\prime$ such that $v_{ufr}$ and $v_{uf^\prime r}$ are two vertices in $\mathbf{M}$. Since the two vertices represents the same user but for different messages then they do not satisfy C2 and thus are not connected. However, $\mathbf{M}$ is a clique and thus all the vertices which conclude that $S_u^T k = 1$ and finally that $\tau_{\kappa} \subset \alpha(\kappa,R)$.

We now prove the other inclusion. Assume $\exists \ u \in \alpha(\kappa,R)$ such that $u \notin \tau_{\kappa}$. Let $f \in \mathcal{F}$ that ensures $S_u^T k = 1$. The vertex $v_{ufR}$ satisfy condition C1 with all the vertices $v \in \mathbf{M}$. Further, since the transmission $(\kappa,R)$ is instantly decodable for all users $u$ in $\tau_{\kappa}$ then the message combination $f$ satisfy the condition C2 with all the vertices $v \in \mathbf{M}$. Therefore, vertex $v_{ufR}$ is connected to all the vertices in the clique $\mathbf{M}$ which proves that $v_{ufR} \cup \mathbf{M}$ is also a clique. However, since $\mathbf{M}$ is a maximal clique then $v_{ufR} \cup \mathbf{M}$ cannot be a clique. Finally $\alpha(\kappa,R) \subset \tau_{\kappa}$ and we obtain $\alpha(\kappa,R) = \tau_{\kappa}$. As a conclusion each maximal clique $\mathbf{M}$ represents a feasible transmission. The uniqueness flows directly from the uniqueness of the message combination $\kappa$ and the transmission rate $R$.

We now show the converse, i.e., each feasible transmission can be uniquely represented by a clique in the RA-IDNC graph. Let $\kappa$ be a message combination and $R$ a transmission rate. Define the set of users $\alpha(\kappa,R)$ that can instantly decode the transmission $(\kappa,R)$ in a similar way as in \eref{eqth11}. Let $f(u)$ be the message intended to user $u \in \alpha(\kappa,R)$. We want to show that the set of vertices $\left\{ v_{u f(u) R}\right\}_{u \in \alpha(\kappa,R)} = \mathbf{M}$ form a maximal clique in the RA-IDNC graph.

Let $v_{u f R}$ and $v_{u^\prime f^\prime R}$ be two vertices from the set $\mathbf{M}$ defined earlier. These two vertices verify the connectivity condition C1 since the rate is the same. If $f = f^\prime$ then it is clear that the vertices satisfy condition C2 and thus they are connected. now assume that $f \neq f^\prime$ and assume either $f^\prime \notin \mathcal{H}_u$ or $f \notin \mathcal{H}_{u^\prime}$. Therefore, $f^\prime \in \mathcal{W}_u$ or $f \in \mathcal{W}_{u^\prime}$ which leads to $S_u^T k > 1$ or $S_{u^\prime}^T k > 1$ since $\kappa_f = \kappa_{f^\prime} = 1$ and $S_{uf} = S_{uf^\prime} = S_{u^\prime f} = S_{u^\prime f^\prime}=1$. This results in $u$ or $u^\prime \notin \alpha(\kappa,R)$ in contradiction with the first assumption. Finally, we conclude that vertices $v_{u f R}$ and $v_{u^\prime f^\prime R}$ satisfy condition C2 and thus they are connected. Therefore $\mathbf{M}$ is a clique. Showing that $\mathbf{M}$ is a maximal clique follows the same steps as proving that $\alpha(\kappa,R) = \tau_{\kappa}$ and the uniqueness of the clique follows from the uniqueness of $\alpha(\kappa,R)$. Hence, there a one-one mapping between the set of maximal cliques in the RA-IDNC graph and the set of feasible schedule.

To conclude the proof, we show that the weight of the clique is the objective function of \eref{opti}. Let the weight of vertex $v_{ufr}$ be defined as in \eref{weight}. The weight of a clique $\mathbf{M}$ with its corresponding message mix $\kappa$ and transmission rate $R$ is:
\begin{align}
w(\mathbf{M}) &= \sum_{v\in \mathbf{M}} w(v) = \sum_{u \in \tau_{\kappa}} w(v_{ufR}) \nonumber \\
&= \sum_{u \in \tau_{\kappa} \cap \mathcal{K}_R} \log (\cfrac{R}{\epsilon_u(R,R_u)N}) .
\end{align}

Therefore, the problem of reducing the completion time \eref{opti} in RA-IDNC-based networks is equivalent to the maximum weight clique problem among the maximal cliques in the RA-IDNC graph.

\bibliographystyle{IEEEtran}
\bibliography{citations}

\end{document}